\title{AsymDPOP: Complete Inference for Asymmetric Distributed Constraint Optimization Problems}
\author{
Yanchen Deng
\and
Ziyu Chen\footnote{Corresponding Author}\and
Dingding Chen\and
Wenxin Zhang\And
Xingqiong Jiang\\
\affiliations
College of Computer Science, Chongqing University\\
\emails
dyc941126@126.com,
\{chenziyu,dingding\}@cqu.edu.cn, 
wenxinzhang18@163.com,
jxq@cqu.edu.cn
}
\begin{document}

\maketitle

\begin{abstract}
  Asymmetric distributed constraint optimization problems (ADCOPs) are an emerging model for coordinating agents with personal preferences. However, the existing inference-based complete algorithms which use local eliminations cannot be applied to ADCOPs, as the parent agents are required to transfer their private functions to their children. Rather than disclosing private functions explicitly to facilitate local eliminations, we solve the problem by enforcing delayed eliminations and propose AsymDPOP, the first inference-based complete algorithm for ADCOPs. To solve the severe scalability problems incurred by delayed eliminations, we propose to reduce the memory consumption by propagating a set of smaller utility tables instead of a joint utility table, and to reduce the computation efforts by sequential optimizations instead of joint optimizations.  The empirical evaluation indicates that AsymDPOP significantly outperforms the state-of-the-art, as well as the vanilla DPOP with PEAV formulation.
\end{abstract} 

\section{Introduction}
Distributed constraint optimization problems (DCOPs) \cite{modi2005adopt,fiorettoPY18} are a fundamental framework in multi-agent systems where agents coordinate their decisions to optimize a global objective. DCOPs have been adopted to model many real world problems including radio frequency allocation \cite{monteiro2012multi}, smart grid \cite{fioretto2017distributed} and distributed scheduling \cite{maheswaranTBPV04,liNL16}. 

Most of complete algorithms for DCOPs employ either distributed search \cite{hirayamay97,modi2005adopt,gershman2009asynchronous,yeoh2010bnb} or inference \cite{petcuF05,vinyals2009generalizing} to optimally solve DCOPs. However, since DCOPs are NP-hard, complete algorithms cannot scale up due to exponential overheads. Thus, incomplete algorithms \cite{maheswaran2004distributed,Zhang2005Distributed,okamoto2016distributed,rogers2011bounded,zivan2012max,chenDWH18,ottens2017duct,fioretto2016dynamic} are proposed to trade optimality for smaller computational efforts. 

Unfortunately, DCOPs fail to capture the ubiquitous asymmetric structure \cite{burke2007supply,maheswaranTBPV04,ramchurn2011agent} since each constrained agent shares the same payoffs. PEAV \cite{maheswaranTBPV04} attempts to capture the asymmetric costs by introducing mirror variables and the consistency is enforced by hard constraints. However, PEAV suffers from scalability problems since the number of variables significantly increases. Moreover, many classical DCOP algorithms perform poorly when applied to the formulation due to the presence of hard constraints \cite{grinshpounGZNM13}. On the other side, ADCOPs \cite{grinshpounGZNM13} are another framework that captures asymmetry by explicitly defining the exact payoff for each participant of a constraint without introducing any variables, which has been intensively investigated in recent years.

Solving ADCOPs involves evaluating and aggregating the payoff for each constrained agent, which is challenging in asymmetric settings due to a privacy concern. SyncABB and ATWB \cite{grinshpounGZNM13} are asymmetric adaption of SyncBB \cite{hirayamay97} and AFB \cite{gershman2009asynchronous}, using an one-phase strategy to aggregate the individual costs. That is, the algorithms systematically check each side of a constraint before reaching a complete assignment. Besides, AsymPT-FB \cite{litov2017forward} is the first tree-based algorithm for ADCOPs, which uses forward bounding to compute lower bounds and back bounding to achieve one-phase check. Recently, PT-ISABB \cite{2019arXiv190206039D} was proposed to improve the tree-based search by implementing a non-local elimination version of ADPOP \cite{petcu2005approximations} to provide much tighter lower bounds. However, since it relies on an exhaustive search to guarantee the optimality, the algorithm still suffers from exponential communication overheads. On the other hand, although complete inference algorithms (e.g., DPOP \cite{petcuF05}) only require a linear number of messages to solve DCOPs, they cannot be directly applied to ADCOPs without PEAV due to their requirement for complete knowledge of each constraint to facilitate variable elimination. Accordingly, the parents have to transfer their private cost functions to their children, which leaks at least a half of privacy.

In this paper, we adapt DPOP for solving ADCOPs for the first time by deferring the eliminations of variables. Specifically, we contribute to the state-of-the-art in the following aspects.
\begin{itemize}
	\item We propose AsymDPOP, the first complete inference-based algorithm to solve ADCOPs, by generalizing non-local elimination \cite{2019arXiv190206039D}. That is, instead of eliminating variables at their parents, we postpone the eliminations until their highest neighbors in the pseudo tree. In other words, an agent in our algorithm may be responsible for eliminating several variables. 
	\item We theoretically analyze the complexity of our algorithm where the space complexity of an agent is not only exponential in the number of its separators but also the number of its non-eliminated descendants. 
	\item We scale up our algorithm by introducing a table-set propagation scheme to reduce the memory consumption and a mini-batch scheme to reduce the number of operations when performing eliminations. Our empirical evaluation indicates that our proposed algorithm significantly outperforms the state-of-the-art, as well as the vanilla DPOP with PEAV formulation.
\end{itemize}

\section{Backgrounds}
In this section we introduce the preliminaries including DCOPs, ADCOPs, pseudo tree, DPOP and non-local elimination.
\subsection{Distributed Constraint Optimization Problems}
A distributed constraint optimization problem \cite{modi2005adopt} is defined by a tuple $\langle A,X,D,F\rangle$ where
\begin{itemize}
	\item $A=\{a_1,\dots, a_n\}$ is the set of agents
	\item $X=\{x_1,\dots,x_m\}$ is the set of variables
	\item $D=\{D_1,\dots, D_m\}$ is the set of domains. Variable $x_i$ takes values from $D_i$
	\item $F=\{f_1,\dots,f_q\}$ is the set of constraint functions. Each function $f_i:D_{i1}\times\dots\times D_{ik}\rightarrow\mathbb{R}_{\ge 0}$ specifies the cost assigned to each combination of $x_{i1},\dots,x_{ik}$.
\end{itemize}

For the sake of simplicity, we assume that each agent controls a variable (and thus the term "agent" and "variable" could be used interchangeably) and all constraint functions are binary (i.e., $f_{ij}: D_i\times D_j\rightarrow \mathbb{R}_{\ge 0}$). A solution to a DCOP is an assignment to all the variables such that the total cost is minimized. That is,
$$
X^*=\mathop{\arg\min}_{d_i\in D_i, d_j\in D_j}\sum_{f_{ij}\in F}f_{ij}(x_i=d_i,x_j=d_j)
$$
\begin{figure}
	\centering
	\includegraphics[scale=.29]{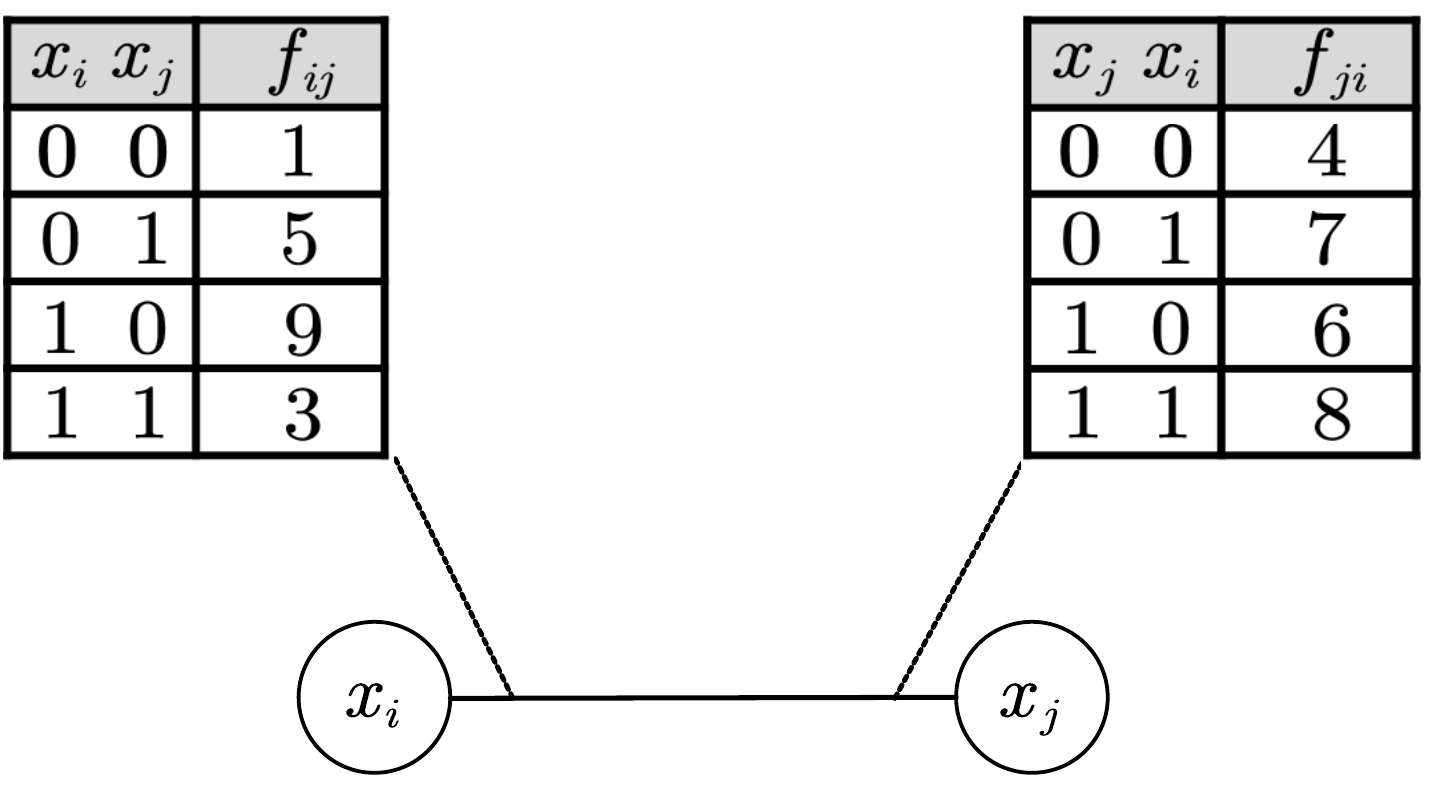}
	\caption{An ADCOP with two variables and a constraint}
	\label{adcop}
\end{figure}
\subsection{Asymmetric Distributed Constraint Optimization Problems}
While DCOPs assume an equal payoff for each participant of each constraint, asymmetric distributed constraint optimization problems (ADCOPs) \cite{grinshpounGZNM13} explicitly define the exact payoff for each constrained agent. In other words, a constraint function $f_i:D_{i1}\times\dots\times D_{ik}\rightarrow\mathbb{R}^k_{\ge 0}$ in an ADCOP specifies a cost vector for each possible combination of involved variables. And the goal is to find a solution which minimizes the aggregated cost. An ADCOP can be visualized by a constraint graph where the vertexes denote variables and the edges denote constraints. Fig. \ref{adcop} presents an ADCOP with two variables and a constraint.  Besides, for the constraint between $x_i$ and $x_j$, we denote the private function for $x_i$ and $x_j$ as $f_{ij}$ and $f_{ji}$, respectively.

\subsection{Pseudo Tree}
A pseudo tree \cite{freuderQ85} is an ordered arrangement to a constraint graph in which different branches are independent. A pseudo tree can be generated by a depth-first traverse to a constraint graph, categorizing constraints into tree edges and pseudo edges (i.e., non-tree edges). The neighbors of an agent $a_i$ are therefore categorized into its parent $P(a_i)$, pseudo parents $PP(a_i)$, children $C(a_i)$ and pseudo children $PC(a_i)$ according to their positions in the pseudo tree and the types of edges they connect through. We also denote its parent and pseudo parents as $AP(a_i)=P(a_i)\cup PP(a_i)$, and its descendants as $Desc(a_i)$. Besides, we denote its separators, i.e., the set of ancestors which are constrained with $a_i$	and its descendants, as $Sep(a_i)$ \cite{petcu2006odpop}. Finally, we denote $a_i$'s interface descendants, the set of descendants which are constrained with $Sep(a_i)$, as $ID(a_i)$. Fig. \ref{pseudo-tree} presents a pseudo tree in which the dotted edge is a pseudo edge.
\begin{figure}
	\centering
	\includegraphics[scale=.12]{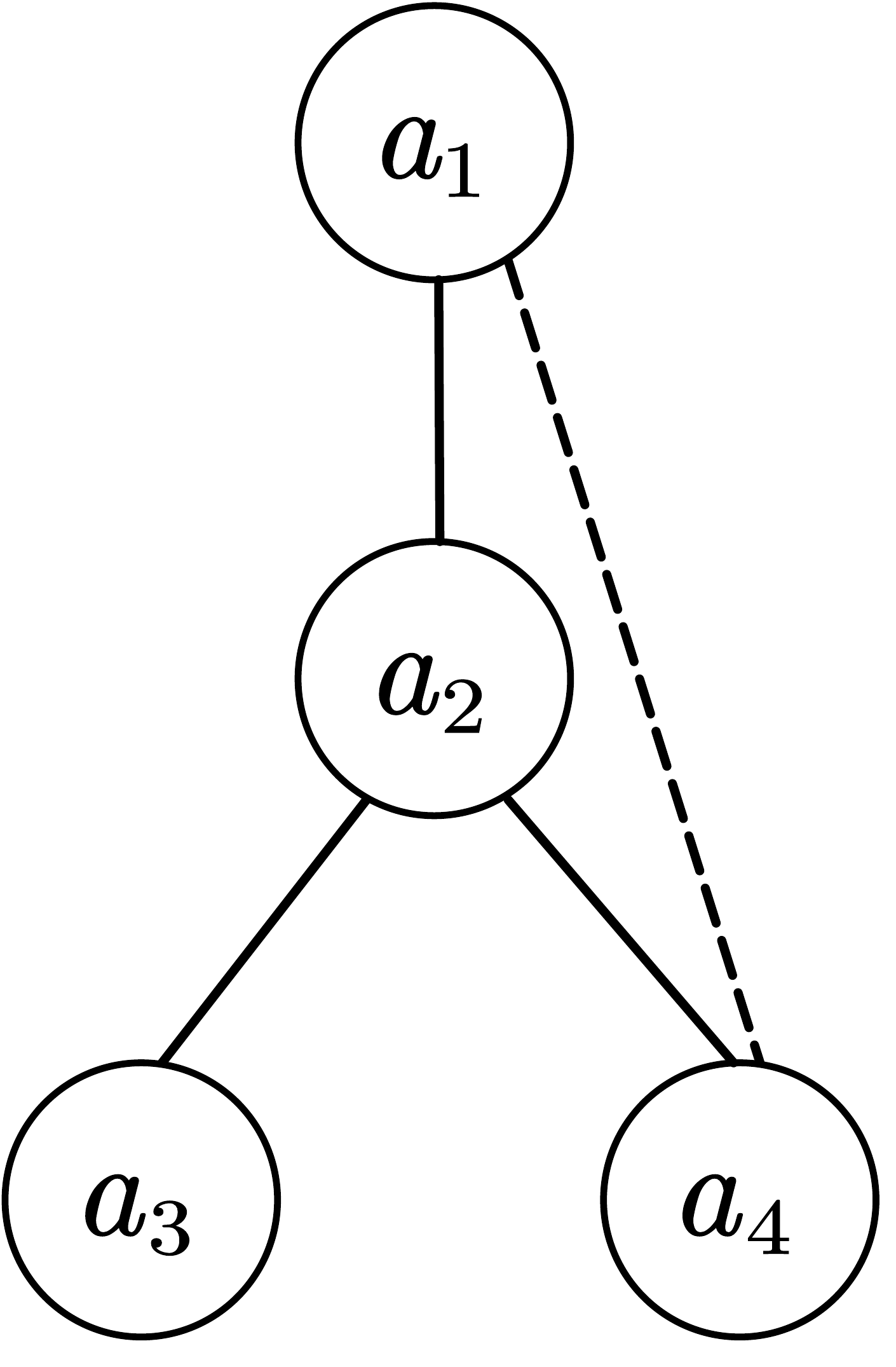}
	\caption{A pseudo tree}
	\label{pseudo-tree}
\end{figure}
\subsection{DPOP and Non-local Elimination}
DPOP \cite{petcuF05} is an inference-based complete algorithm for DCOPs based on bucket elimination \cite{dechter1999bucket}. Given a pseudo tree, it performs a bottom-up utility propagation phase to eliminate variables and a value propagation phase to assign the optimal assignment for each variable. More specifically, in the utility propagation phase, an agent $a_i$ eliminates its variables from the joint utility table by computing the optimal utility for each possible assignment to $Sep(a_i)$ after receiving the utility tables from its children, and sends the projected utility table to its parent. In the value propagation phase, $a_i$ computes the optimal assignments for its variables by considering the assignments received from its parent, and propagates the joint assignment to its children. Although DPOP only requires a linear number of messages to solve a DCOP, its memory consumption is exponential in the induced width. Thus, several tradeoffs including ADPOP \cite{petcu2005approximations}, MB-DPOP \cite{petcu2007mb} and ODPOP \cite{petcu2006odpop} have been proposed to improve its scalability.

However, DPOP cannot be directly applied to asymmetric settings as it requires the total knowledge of each constraint to perform optimal elimination locally. PT-ISABB \cite{2019arXiv190206039D} applies (A)DPOP into solving ADCOPs by performing variable elimination only to a subset of constraints to build look-up tables for lower bounds, and uses a tree-based search to guarantee the optimality. The algorithm further reinforces the bounds by a non-local elimination scheme. That is, instead of performing elimination locally, the elimination of a variable is postponed to its parent to include the private function enforced in the parent's side and increase the integrity of the utility table. 

\section{Asymmetric DPOP}
The existing complete algorithms for ADCOPs use complete search to exhaust the search space, which makes them unsuitable for large scale applications. In fact, as shown in our experimental results, these search-based algorithms can only solve the problems with the agent number less than 20. Hence, in this section, we propose a complete, privacy-protecting, and scalable inference-based algorithm for ADCOPs built upon generalized non-local elimination, called AsymDPOP. An execution example can be found in the appendix (https://arxiv.org/abs/1905.11828).

\begin{figure}
	\centering
	\includegraphics[scale=0.85]{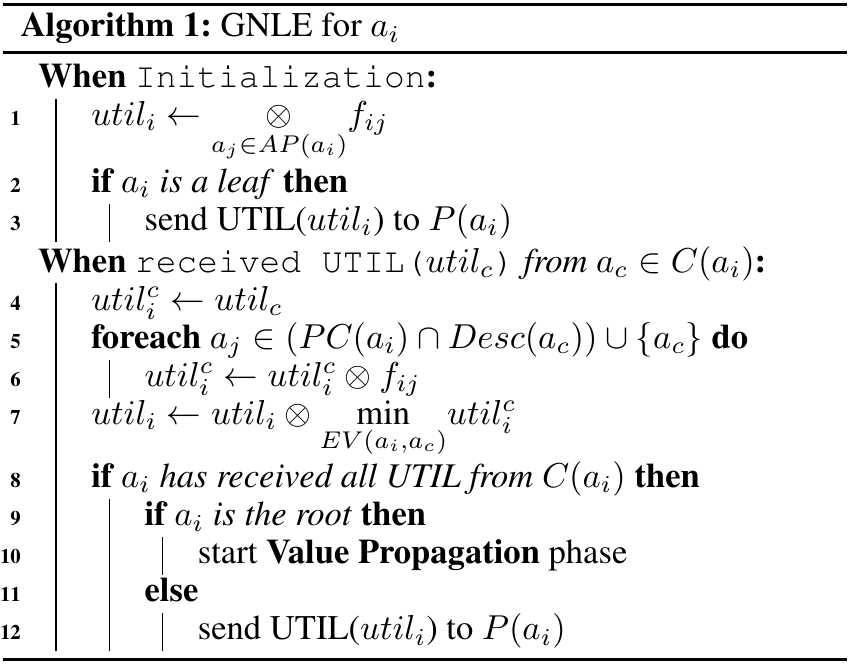}
\end{figure}
\subsection{Utility Propagation Phase}
In DPOP, a variable could be eliminated locally without loss of completeness after receiving all the utility messages from its children since all functions that involve the variable have been aggregated. However, the conclusion does not hold for ADCOPs. Taking Fig. \ref{pseudo-tree} as an example, $x_4$ cannot be eliminated locally since the private functions $f_{14}$ and $f_{24}$ are not given. Thus, the local elimination to $x_4$ w.r.t. $f_{41}$ and $f_{42}$ would lead to overestimate bias and offer no guarantee on the completeness. A na\"ive solution would be that $x_1$ and $x_2$ transfer their private functions to their children, which would lead to an unacceptable privacy loss.

Inspired by non-local elimination, we consider an alternative privacy-protecting approach to aggregate constraint functions. That is, instead of deferring to its parent, we postpone the elimination of a variable to its highest (pseudo) parent. In this way, all the functions involving the variables have been aggregated and the variable can be eliminated from the utility table optimally. Note that the utility table is a summation of the utility tables from children and local constraints. As a result, although the utility table which contains the variable's private functions is propagated to its ancestors without elimination, the ancestors can hardly infer the exact payoffs in these private functions. We refer the bottom-up utility propagation phase as generalized non-local elimination (GNLE).

Before diving into the details of GNLE, let's first consider the following definitions.
\newtheorem{definition}{Definition}
\begin{definition}[$dims$]
	The $dims(\cdot)$ is a function which returns the set of dimensions of a utility table.
\end{definition}
\begin{definition}[Slice]
	Let $S$ be a set of key-value pairs. $S_{[K]}$ is a slice of $S$ over $K$ such that
	$$S_{[K]}=\{(k=v)\in S|k\in K\}$$
\end{definition}
\begin{definition}[Join \cite{vinyals2009generalizing}]
	Let $U,U^\prime$ be two utility tables and $D_U=\times_{x_i\in dims(U)}D_i,D_{U^\prime}=\times_{x_i\in dims(U^\prime)}D_i$ be their joint domain spaces. $U\otimes U^\prime$ is the join of $U$ and $U^\prime$ over $D_{U\otimes U^\prime}=\times_{x_i\in dims(U)\cup dims(U^\prime)}D_i$ such that
	$$(U\otimes U^\prime)(V)=U(V_{[dims(U)]})+U^\prime(V_{[dims(U^\prime)]}),\forall V\in D_{U\otimes U^\prime}$$
\end{definition}

Algorithm 1 presents the sketch of GNLE. The algorithm begins with leaf agents sending their utility tables to their parents via UTIL messages (line 2-3). When an agent $a_i$ receives a UTIL message from a child $a_c$, it joins its private functions w.r.t. its (pseudo) children in branch $a_c$ (line 5-6), and eliminates all the belonging variables whose highest (pseudo) parents are $a_i$ from the utility table (line 7). Here, $EV(a_i,a_c)$ is given by
$$EV(a_i,a_c)=PC(a_i)\cap Desc(a_c)\cup\{a_c\}\backslash ID(a_i)$$
Then $a_i$ joins the eliminated utility table with its running utility table $util_i$. It is worth mentioning that computing the set of elimination variables $EV(a_i,a_c)$ does not require agents to exchange their relative positions in a pseudo tree. Specifically, each variable is associated with a counter which is initially set to the number of its parent and pseudo parents. When its (pseudo) parent receives the UTIL message containing it, the counter decreases. And the variable is added to the set of elimination variables as soon as its counter equals zero.

After receiving all the UTIL messages from its children, $a_i$ propagates the utility table $util_i$ to its parent if it is not the root (line 12). Otherwise, the value propagation phase starts (line 10).
\begin{figure}
	\centering
	\includegraphics[scale=1]{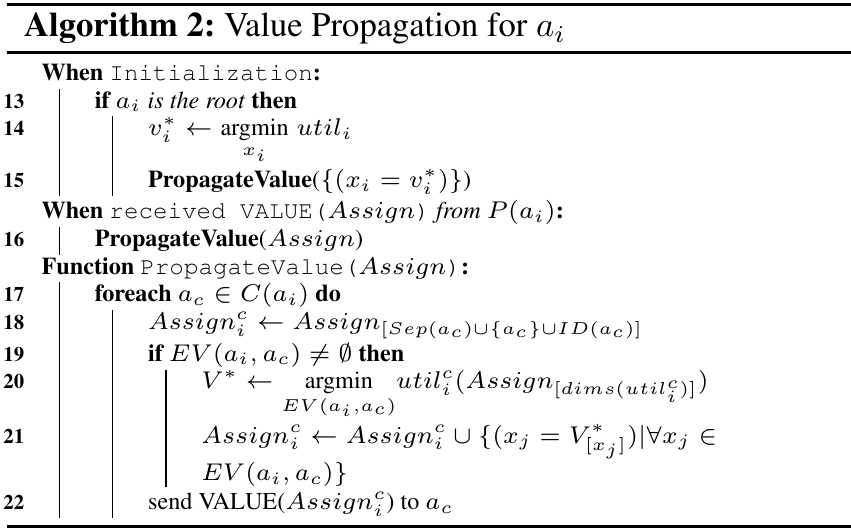}
	\label{vp}
\end{figure}
\begin{figure}
	\centering
	\includegraphics[scale=.4]{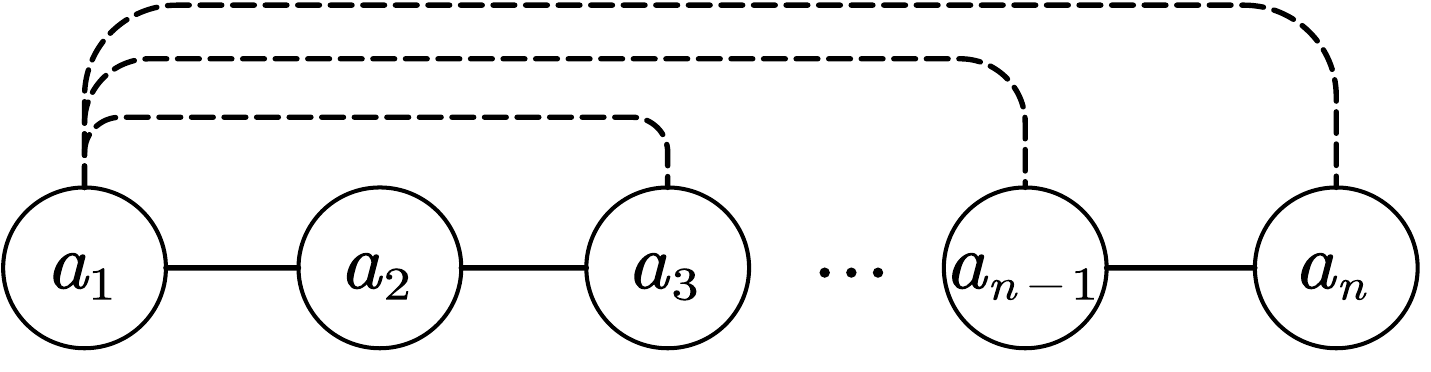}
	\caption{A chain-like pseudo tree}
	\label{chain}
\end{figure}
\subsection{Value Propagation Phase}
In contrary to the one in vanilla DPOP which determines the optimal assignment locally for each variable, the value assignment phase in AsymDPOP should be specialized to accommodate the non-local elimination. Specifically, since a variable is eliminated at its highest (pseudo) parent, the parent is responsible for selecting the optimal assignment for that variable. Thus, the value messages in our algorithm would contain not only the assignments for ancestors, but also assignments for descendants. Algorithm 2 presents the sketch of value propagation phase. 

The phase is initiated by the root agent selecting the optimal assignment (line 14). Given the determined assignments either from its parent (line 16) or computed locally (line 15), agent $a_i$ selects the optimal assignments for the eliminated variables in each branch $a_c\in C(a_i)$ by a joint optimization over them (line 17-20), and propagates the assignments together with the determined assignments to $a_c$ (line 21-22). The algorithm terminates when each leaf agent receives a VALUE message.  
\subsection{Complexity Analysis}
\newtheorem{theorem}{Theorem}
\begin{theorem}
	The size of a UTIL message produced by an agent is exponential in the number of its separator and its interface descendants.
\end{theorem}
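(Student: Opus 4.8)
The plan is to determine exactly which variables can remain as dimensions of the outgoing table $util_i$ and then read off its size as the product of the corresponding domain sizes. I will show that $dims(util_i)\subseteq\{a_i\}\cup Sep(a_i)\cup ID(a_i)$. Since $util_i$ stores one entry per joint assignment to its dimensions, this gives $|util_i|\le d^{1+|Sep(a_i)|+|ID(a_i)|}$, where $d$ bounds the domain size, which is precisely the claimed dependence ``exponential in the number of its separator and its interface descendants'' (the additive constant for $a_i$ itself being absorbed into the exponential order). To make the bound tight I would also argue the reverse inclusion, that generically each of these variables does survive.

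First I would set up a bottom-up induction over the pseudo tree. For the base case, a leaf $a_i$ has no children, so $util_i$ is assembled only from its local private functions toward its (pseudo) parents; their dimensions are $a_i$ together with the members of $AP(a_i)$, each of which is an ancestor constrained with $a_i$ and hence lies in $Sep(a_i)$, while $ID(a_i)=\emptyset$. For the inductive step I would track how dimensions enter and leave $util_i$ as $a_i$ merges each child message. A dimension can enter only in two ways: it is carried up inside a child table $util_c$, which by the induction hypothesis satisfies $dims(util_c)\subseteq\{a_c\}\cup Sep(a_c)\cup ID(a_c)$; or it is introduced when $a_i$ joins a local private function toward a (pseudo) child $c'$ in that branch (line 5--6), which can only add $a_i$ or the neighbour $c'$. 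Dimensions leave only through the elimination of $EV(a_i,a_c)$ (line 7). Any ancestor that appears as a dimension is, by construction, constrained with $a_i$ or one of its descendants, hence belongs to $Sep(a_i)$ by definition, and no variable from a disjoint branch can appear because distinct pseudo-tree branches are independent.

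The core of the argument is the descendant characterization. Using the counter mechanism described after Algorithm~1 --- a variable's counter is initialised to its number of (pseudo) parents and decremented each time a (pseudo) parent processes it, so the variable is eliminated exactly at its highest (pseudo) parent --- a descendant $x$ of $a_i$ survives in $util_i$ iff its highest (pseudo) parent is strictly above $a_i$. Because $x$ and any strict ancestor of $a_i$ are in an ancestor--descendant relationship, being constrained forces that ancestor to be a (pseudo) parent of $x$; hence ``$x$ has a (pseudo) parent strictly above $a_i$'' is equivalent to ``$x$ is constrained with some member of $Sep(a_i)$,'' which is exactly the definition of $x\in ID(a_i)$. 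Conversely, every descendant not in $ID(a_i)$ has all of its (pseudo) parents at or below $a_i$ and is therefore eliminated no later than line~7 of the current step, so it does not appear. Combining the ancestor and descendant characterizations yields $dims(util_i)\subseteq\{a_i\}\cup Sep(a_i)\cup ID(a_i)$.

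I expect the main obstacle to be the bookkeeping that links the set-algebraic formula to the counter semantics: verifying that $EV(a_i,a_c)=PC(a_i)\cap Desc(a_c)\cup\{a_c\}\setminus ID(a_i)$ is exactly ``the variables in branch $a_c$ whose highest (pseudo) parent is $a_i$,'' where the $PC(a_i)\cap Desc(a_c)\cup\{a_c\}$ part collects the branch-$a_c$ variables for which $a_i$ is a (pseudo) parent and the $\setminus ID(a_i)$ part spares precisely those still possessing a higher (pseudo) parent. The delicate correctness point is that no dimension is dropped prematurely: before $x$ is eliminated at $a_i$, every private function involving $x$ must already have been aggregated. This requires checking that the functions from $x$'s (pseudo) children arrive through $util_c$, while the functions from $x$'s (pseudo) parents at or below $a_i$ --- including $a_i$'s own view joined at line~5--6 --- are present by the time the branch is processed, which again leans directly on the counter invariant and on branch independence.
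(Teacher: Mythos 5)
Your proposal is correct and follows essentially the same route as the paper: both arguments reduce the theorem to showing that the dimensions of the outgoing table are exactly $\{a_i\}\cup Sep(a_i)\cup ID(a_i)$, splitting into the ancestor part (which the paper handles by citing DPOP's Theorem~1 and you handle by explicit bottom-up induction) and the descendant part (where both use the fact that a descendant survives iff its highest (pseudo) parent lies strictly above $a_i$, which is equivalent to membership in $ID(a_i)$). Your version is merely more self-contained in spelling out the counter/elimination bookkeeping.
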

\begin{proof}
	We prove the theorem by showing a UTIL message produced by agent $a_i$ contains the dimensions of $Sep(a_i)$ and $ID(a_i)\cup\{a_i\}$. The UTIL message must contain the dimensions of $ID(a_i)\cup\{a_i\}$ since $a_i$ is not the highest (pseudo) parent of $a_j,\forall a_j\in ID(a_i)\cup\{a_i\}$. On the other hand, according to the definition to $ID(a_i)$, the UTIL message cannot contain the dimensions of $Desc(a_i)\backslash ID(a_i)$ since for each $a_j\in Desc(a_i)\backslash ID(a_i)$ it must exist an agent $a_k\in Desc(a_i)\cup \{a_i\}$ such that $a_k$ is the highest (pseudo) parent of $a_j$ and thus the variable is eliminated at $a_k$ (line 7). Finally, the UTIL message contains $Sep(a_i)$ according to \cite{petcuF05} (Theorem 1). Thus, the size of the UTIL message is exponential in $|Sep(a_i)|+|ID(a_i)|+1$ and the theorem is concluded.

\end{proof}

\section{Tradeoffs}
As shown in Section 3.3, AsymDPOP suffers serious scalability problems in both memory and computation. In this section, we propose two tradeoffs which make AsymDPOP a practical algorithm. 
\subsection{Table-Set Propagation Scheme: A Tradeoff between Memory and Privacy}
The utility propagation phase of AsymDPOP could be problematic due to the unacceptable memory consumption when the pseudo tree is poor. Consider the pseudo tree shown in Fig. \ref{chain}. Since every agent is constrained with the root agent, according to the GNLE all the variables can only be eliminated at the root agent, which incurs a memory consumption of $O(d^{n})$ due to the join operation in each agent. Here, $d=\max_{i}|D_i|$. Besides, a large utility table would also incur unacceptable computation overheads due to the join operations and the elimination operation (line 5-7). 

We notice that utility tables are divisible before performing eliminations. Thus, instead of propagating a joint and high-dimension utility table to a parent agent, we propagate a set of small utility tables. In other words, we wish to reduce the unnecessary join operations (i.e., line 1 and line 7 in the case of $EV(a_i,a_c)=\emptyset$) which could cause a dimension increase during the utility propagation phase. On the other side, completely discarding join operations would incur privacy leakages. For example, if $a_n$ chooses to propagate both $f_{n,n-1}$ and $f_{n1}$ without performing the join operation to $a_{n-1}$, then $a_{n-1}$ would know the private functions of $a_n$ directly. Thus, we propose to compromise the memory consumption and the privacy by a parameter $k_p$ controlling the the maximal number of dimensions of each local utility table. We refer the tradeoff as a table-set propagation scheme (TSPS). 

Specifically, when an agent sends a UTIL message to its parent, it first joins its private functions w.r.t. its parents with the received utility tables whose dimensions contain the dimensions of these private functions. Notice that the first step does not incur a dimension increase, and can reduce the number of utility tables. Finally, it propagates the set of utility tables to its parent.

Consider the pseudo tree shown in Fig. \ref{chain} again. Assume that $k_p=3$ and then agent $a_n$ would propagate the utility set $util_n=\{f_{n,n-1}\otimes f_{n1}\}$ to $a_{n-1}$. Since there is no elimination in $a_{n-1}$, it is unnecessary to perform the join operation in line 7. Thus, $a_{n-1}$ would propagate the utility set $util_{n-1}=\{f_{n-1,n-2}\otimes f_{n-1,1},util_{n-1}^n\}$ to $a_{n-2}$. It can be concluded that TSPS in the example only requires $O(nd^3)$ space, which is much smaller than the one required by GNLE. Formally, we have the following theorem.
\begin{theorem}
	The size of each utility table of an agent $a_i$ in TSPS is no greater than 
	$$
	d^{\max(\min(|AP(a_i)|, k_p), \max\limits_{a_c\in C(a_i)} (|ID(a_i)\cap (ID(a_c)\cup \{a_c\})|+|Sep(a_c)|)}$$
\end{theorem}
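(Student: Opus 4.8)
The plan is to characterize exactly which dimensions can appear in each table of the set $util_i$ that $a_i$ propagates, and to show that every such table falls into one of two classes, each matching one argument of the outer $\max$. Throughout I would invoke Theorem~1 to control the dimensions of every table $a_i$ receives from a child $a_c$: such a table ranges only over $Sep(a_c)\cup ID(a_c)\cup\{a_c\}$. The bound is claimed for the tables that are actually \emph{propagated} (i.e. after the eliminations of lines 5--7), so the argument splits into the private-function tables and the child-derived tables.

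First, the private-function tables, which give the first argument of the $\max$. By the TSPS rule, $a_i$ builds tables by joining its private functions $f_{i,p}$ toward $p\in AP(a_i)$ (each supported on $\{a_i,p\}$), merging them into received tables when possible and otherwise grouping them subject to the dimension cap $k_p$. Every such table therefore ranges only over $\{a_i\}\cup AP(a_i)$, and the cap forbids more than $k_p$ dimensions, so its size is at most $d^{\min(|AP(a_i)|+1,\,k_p)}$ --- the first argument of the $\max$ (up to the self-dimension $a_i$). This part is routine counting.

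Second, and this is the heart of the argument, come the tables obtained by processing a child branch $a_c$. I would first note that joining $a_i$'s local functions toward branch $a_c$ introduces no new dimensions, since each is supported on $\{a_i,a_c\}\cup(PC(a_i)\cap Desc(a_c))\subseteq Sep(a_c)\cup ID(a_c)\cup\{a_c\}$ (using $a_i\in Sep(a_c)$ and the fact that a pseudo child of $a_i$ lying in $a_c$'s subtree is constrained with $a_i\in Sep(a_c)$, hence belongs to $ID(a_c)$). The key claim is then that after eliminating $EV(a_i,a_c)=\big((PC(a_i)\cap Desc(a_c))\cup\{a_c\}\big)\setminus ID(a_i)$ the surviving dimensions lie in $Sep(a_c)\cup\big(ID(a_i)\cap(ID(a_c)\cup\{a_c\})\big)$. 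The separators survive untouched, being ancestors that $a_i$ never eliminates, so the content of the claim is that every surviving member $y$ of $ID(a_c)\cup\{a_c\}$ lies in $ID(a_i)$. I would prove this by a case split on why $y\notin EV(a_i,a_c)$: if it is because $y\in ID(a_i)$ we are done; otherwise $y\in ID(a_c)$ is constrained with some $s\in Sep(a_c)$, and since $y$ is then neither $a_c$ nor a pseudo child of $a_i$, the value $s=a_i$ is impossible (it would force $y\in PC(a_i)$), so $s$ is a strict ancestor of $a_i$ constrained with the descendant $y$, giving $s\in Sep(a_i)$ and hence $y\in ID(a_i)$. Because $Sep(a_c)$ (ancestors) and $ID(a_i)\cap(ID(a_c)\cup\{a_c\})$ (descendants) are disjoint, the surviving dimension count is at most $|Sep(a_c)|+|ID(a_i)\cap(ID(a_c)\cup\{a_c\})|$, i.e. the second argument of the $\max$, maximized over $a_c\in C(a_i)$.

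Combining the two classes, every propagated table of $a_i$ has at most $\max\big(\min(|AP(a_i)|+1,k_p),\ \max_{a_c\in C(a_i)}(|ID(a_i)\cap(ID(a_c)\cup\{a_c\})|+|Sep(a_c)|)\big)$ dimensions, yielding the stated bound. The hard part will be exactly the set-theoretic elimination lemma of the previous paragraph, and within it the subtle sub-case where an interface descendant of $a_c$ is constrained directly with $a_i$, which must be excluded using the pseudo-child membership encoded in the definition of $EV$. A secondary point to state precisely is that the guarantee concerns the post-elimination tables: the transient join formed while eliminating $EV(a_i,a_c)$ is still only governed by the GNLE bound of Theorem~1 and collapses to the Term-2 size once the eliminated variables are projected out.
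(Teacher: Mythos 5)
Your proposal follows essentially the same route as the paper's proof: bound the local (private-function) tables via the $k_p$ cap on partitioned joins over $AP(a_i)$, bound the child-derived tables via Theorem~1 plus the observation that every dimension in $(ID(a_c)\cup\{a_c\})\setminus ID(a_i)$ is eliminated at $a_i$, and take the maximum of the two. You additionally supply the set-theoretic elimination lemma that the paper leaves implicit, and your more careful count of the local tables, $\min(|AP(a_i)|+1,k_p)$ including the dimension for $x_i$ itself, is actually the one consistent with the paper's own $O(nd^3)$ chain example, whereas the stated bound's first term $\min(|AP(a_i)|,k_p)$ appears to drop that self-dimension.
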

\begin{proof}
	According to Theorem 1, the dimension of each utility table from child $a_c\in C(a_i)$ is a subset of $Sep(a_c)\cup ID(a_c)\cup\{a_c\}$. Since TSPS omits the join operation in line 7, the maximal size of received utility tables of $a_i$ is
	$$
	d^{\max\limits_{a_c\in C(a_i)} (|ID(a_i)\cap (ID(a_c)\cup \{a_c\})|+|Sep(a_c)|)}
	$$
	Since the local utility is partitioned into utility tables according to $k_p$, the maximal size of local utility tables of $a_i$ is 
	$$d^{\min(|AP(a_i)|,k_p)}$$
	Thus the theorem holds.
\end{proof}

\subsection{Mini-batch Elimination Scheme: A Tradeoff between Time and Space}
TSPS could factorize a big utility table to a set of smaller utility tables, which allows us to reduce the computational efforts when performing eliminations by a decrease-and-conquer strategy. Taking Fig. \ref{chain} as an example, to perform the elimination, $a_1$ in GNLE has no choice but to optimize a big utility table over variables $x_2,\dots x_n$ (line 7), which requires $O(d^n)$ operations. Instead, combining with TSPS ($k_p=2$) we could exploit the the structure of each small utility table by arranging the min operators among them to reduce computational complexity. That is, instead of performing
$$\min_{x_2,\dots x_n}u_{1,2,\dots,n}
$$
we perform
$$
\min_{x_2}(f_{12}+f_{21}+\cdots+\min_{x_n}(f_{1n}+f_{n1}+f_{n,n-1}+f_{n-1,n})\cdots)$$
which can be solved recursively from $x_n$ to $x_2$ and the overall complexity is $O(nd^3)$. In other words, we reduce the computational complexity by exploiting the independence among utility tables to avoid unnecessary traverses.
\begin{figure}
	\centering
	\includegraphics[scale=.45]{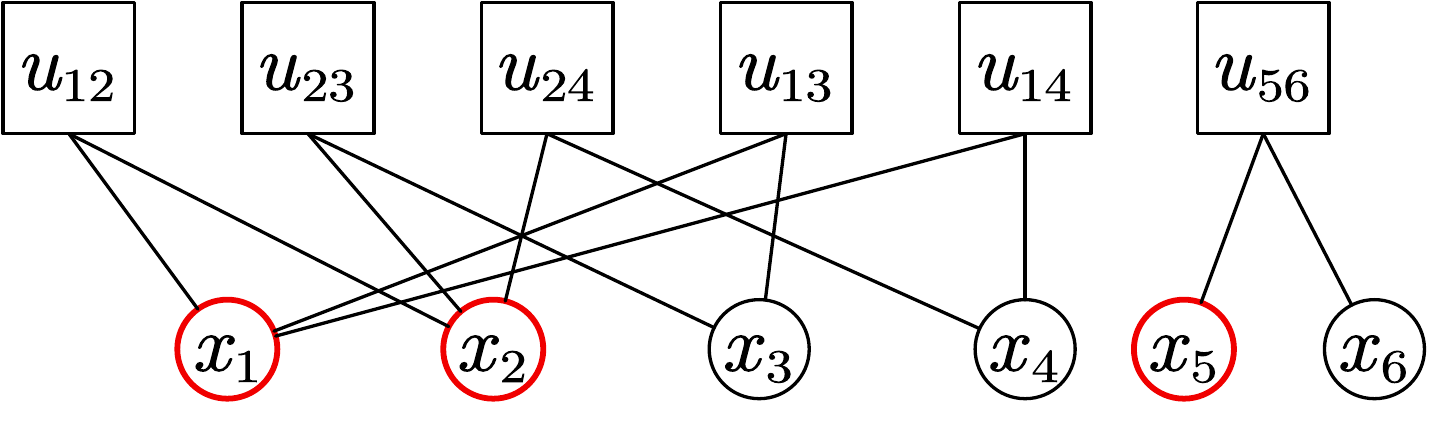}
	\caption{A set of utility tables}
	\label{factor-graph}
\end{figure}
 
However, completely distributing the min operators into every variable would incur high memory consumption as a min operator could implicitly join utility tables to a big and indivisible table. Although the problem can be alleviated by carefully arranging the min operators, it is not easy to find the optimal sequence of eliminations in practice. Consider the utility tables shown as a factor graph in Fig. \ref{factor-graph} where square nodes represent utility tables and circle nodes represent variables. And the red circles represent the variables to be eliminated. Obviously, no matter how to arrange the elimination sequence, a 3-ary utility table must appear when eliminating $x_1$ or $x_2$. Instead, if we jointly optimize both $x_1$ and $x_2$, the maximal number of dimensions are 2.

We thus overcome the problem by introducing a parameter $k_e$ which specifies the minimal number of variables optimized in a min operator (i.e., the size of a batch), and refer the tradeoff as a mini-batch elimination scheme (MBES). Specifically, when performing an elimination operation, we first divide elimination variables into several groups whose variables share at least a common utility table. For each variable group, we divide the variables into batches whose sizes are at least $k_e$ if it is possible. For each batch, we perform optimization to the functions that are related to the batch and replace these functions with the results. The process terminates when all the variable groups are exhausted. 

Note that dividing variables into disjoint variable groups in the first step is crucial since optimizing independent variables jointly is equivalent to optimizing them individually. Taking Fig. \ref{factor-graph} for example, if we set $k_e=2$ and let $x_2,x_5$ be a batch, a 4-ary utility table over $x_1,x_3,x_4$ and $x_6$ still appear even if $x_2$ and $x_5$ is jointly optimized. 

\begin{figure}
	\begin{minipage}{\linewidth}
		\centering
		\includegraphics[scale=.35]{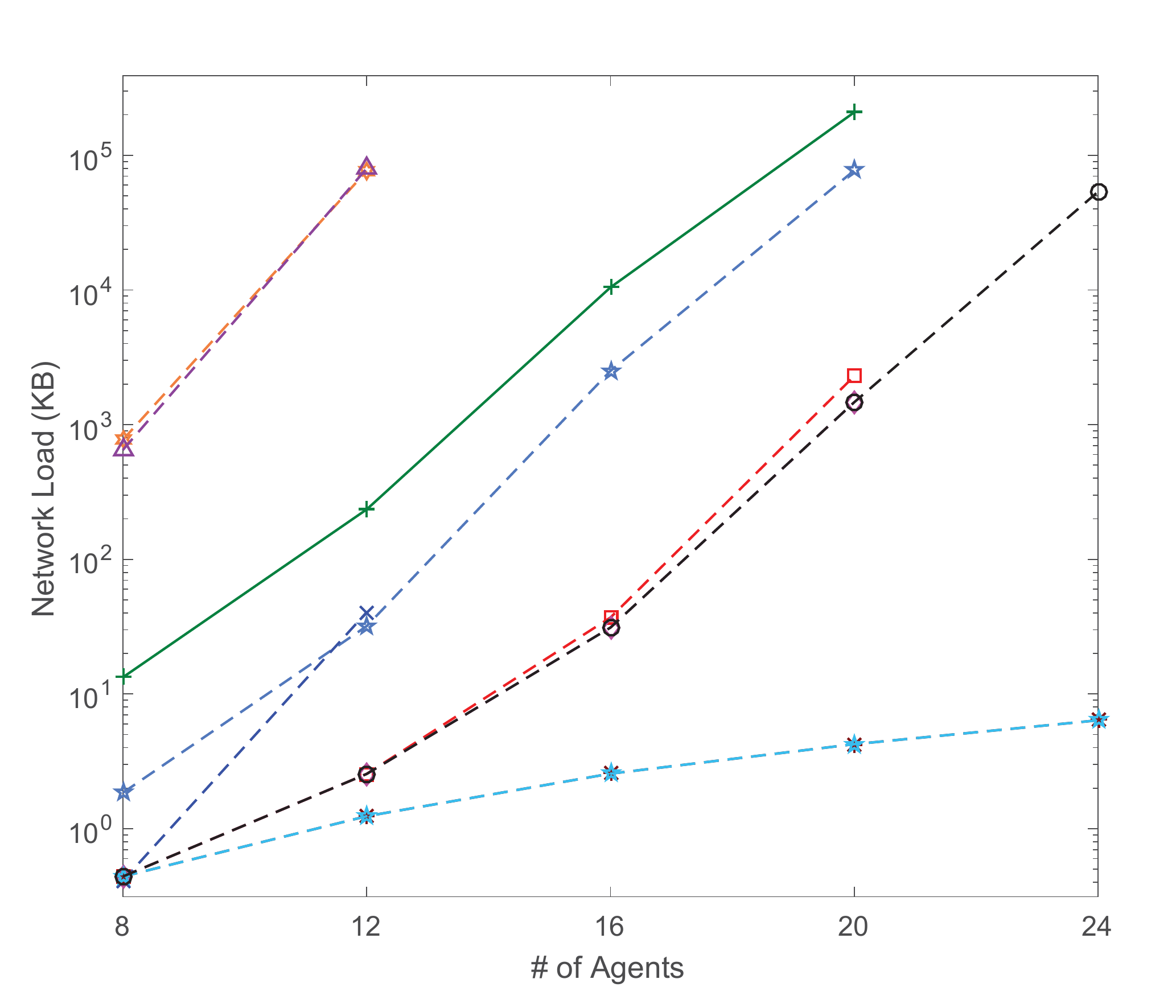}\\
		(a) network load
	\end{minipage}
	\begin{minipage}{\linewidth}
		\centering
		\includegraphics[scale=.35]{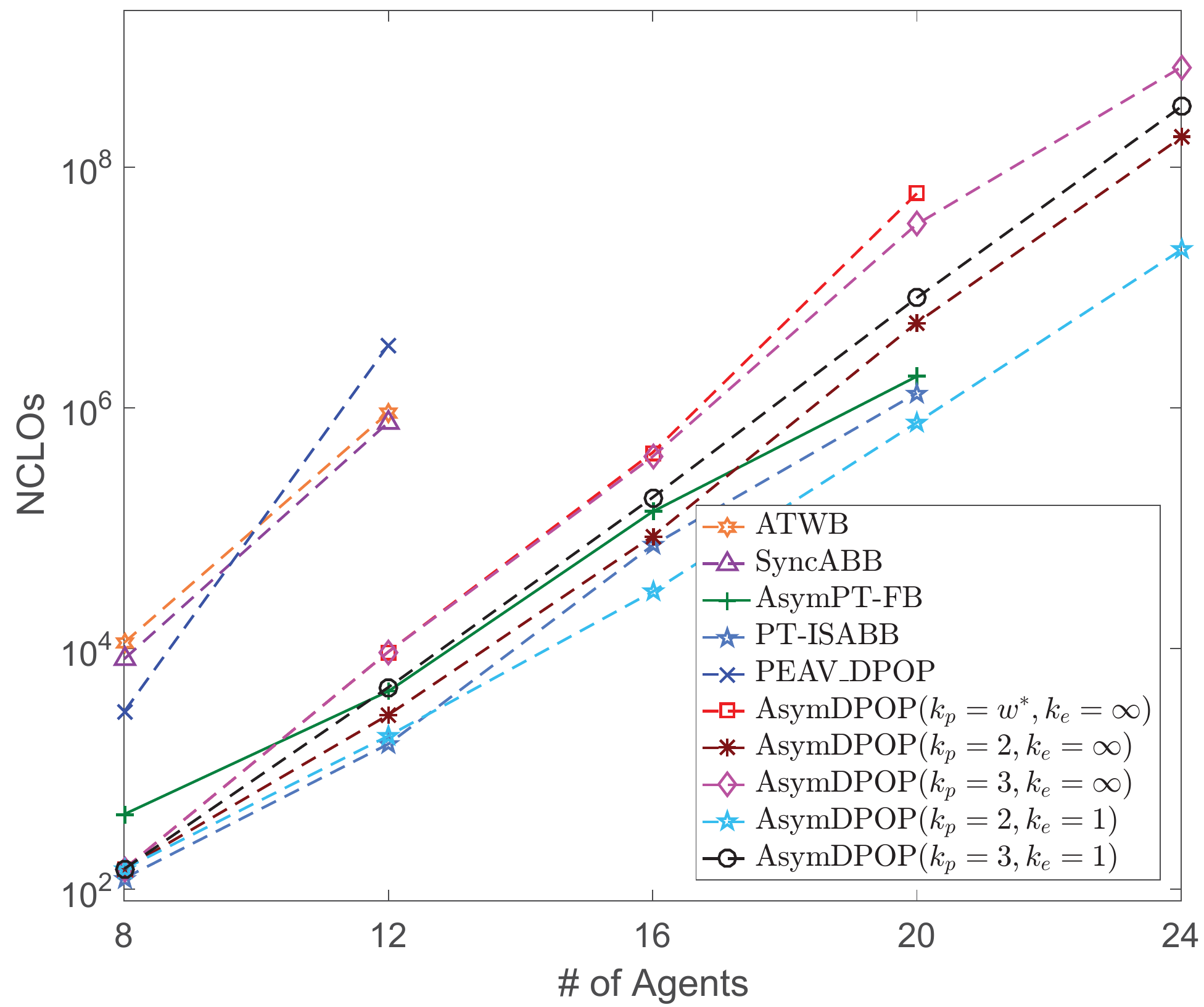}\\
		(b) NCLOs
	\end{minipage}
	\caption{Performance comparison under different agent numbers}
	\label{agent-number}
\end{figure}

\section{Experimental Results}
We empirically evaluate the performances of AsymDPOP and state-of-the-art complete algorithms for ADCOPs including PT-ISABB, AsymPT-FB, SyncABB, ATWB and the vanilla DPOP with PEAV formulation (PEAV\_DPOP) in terms of the number of basic operations, network load and privacy leakage. For inference-based algorithms, we consider the maximal number of dimensions during the utility propagation phase as an additional metric. Specifically, we use NCLOs \cite{Netzer2012} to measure the hardware-independent runtime in which the logic operations in inference-based algorithms are accesses to the utility tables while the ones in search-based algorithms are constraint checks. For the network load, we measure the size of information during an execution. Finally, we use entropy \cite{Brito2009Distributed} to quantify the privacy loss \cite{litov2017forward,grinshpounGZNM13}. For each experiment, we generate 50 random instances and report the medians as the results. 

\begin{figure}
	\begin{minipage}{\linewidth}
		\centering
		\includegraphics[scale=.35]{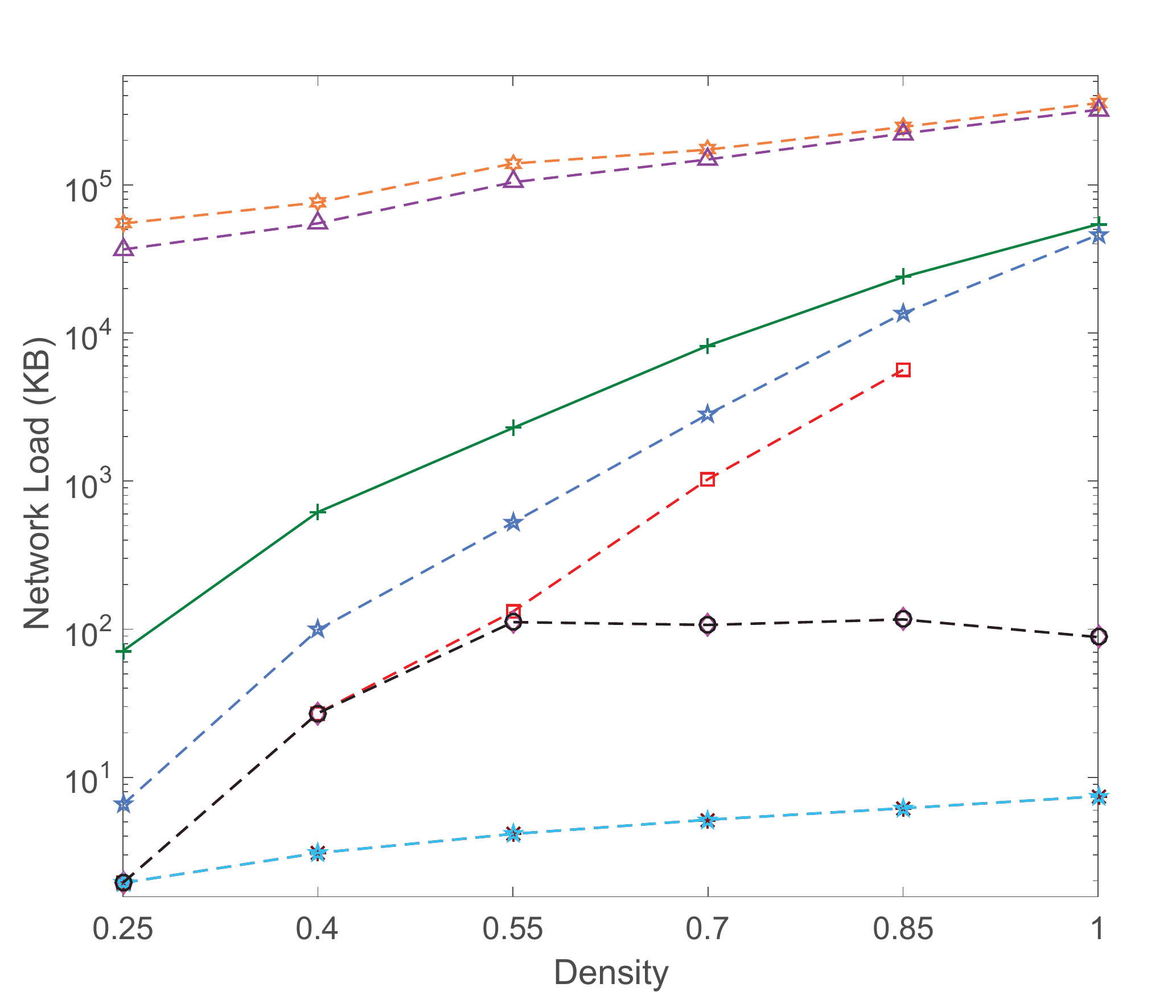}\\
		(a) network load
	\end{minipage}
	\begin{minipage}{\linewidth}
		\centering
		\includegraphics[scale=.35]{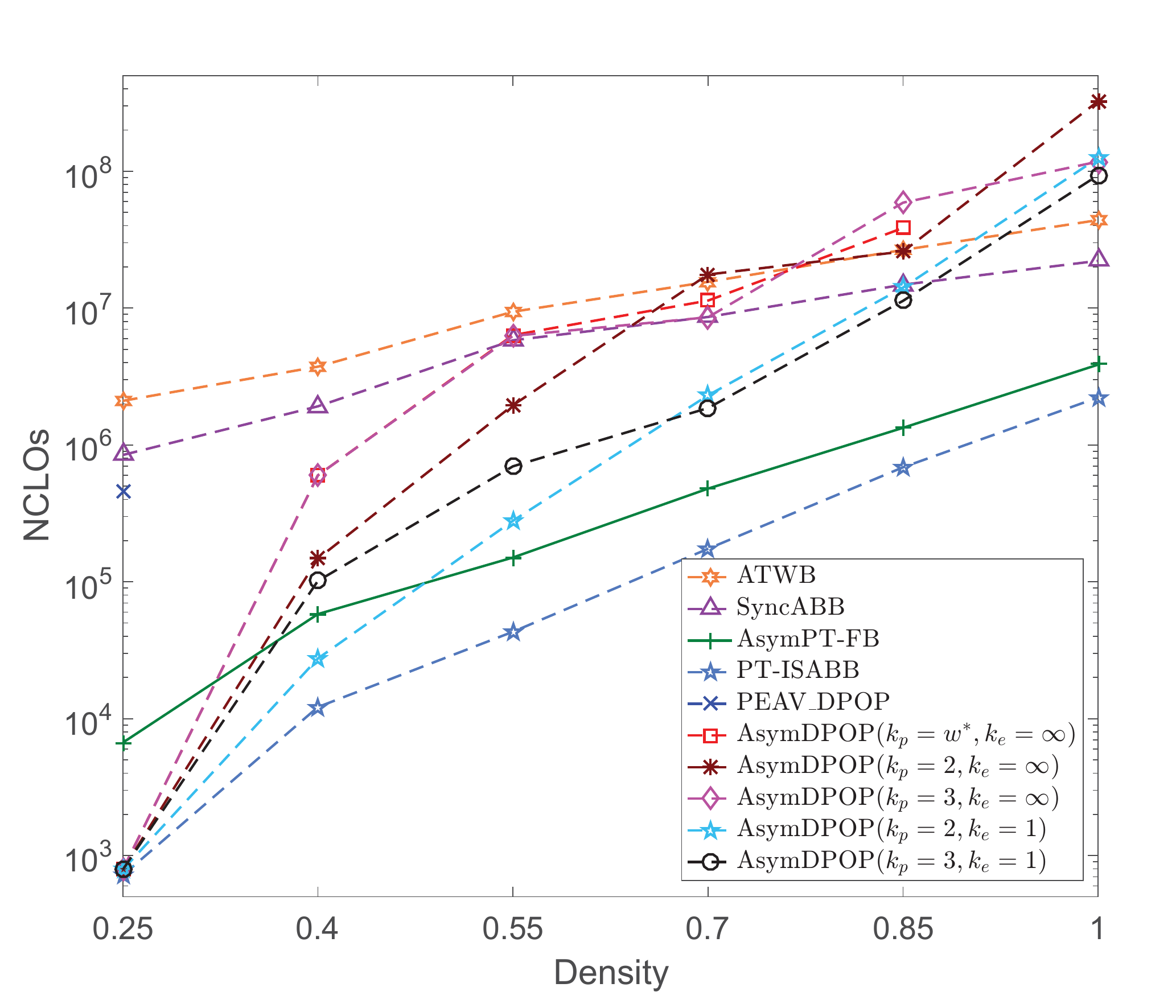}\\
		(b) NCLOs
	\end{minipage}
	\caption{Performance comparison under different densities}
	\label{densities}
\end{figure}
In our first experiment, we consider the ADCOPs with the domain size of 3, the density of 0.25 and the agent number varying from 8 to 24. Fig. \ref{agent-number} presents the experimental results under different agent numbers. It can be concluded that compared to the search-based solvers, our AsymDPOP algorithms exhibit great superiorities in terms of network load. That is due to the fact that search-based algorithms explicitly exhaust the search space by message-passing, which is quite expensive especially when the agent number is large. In contrast, our proposed algorithms incur few communication overheads since they follow an inference protocol and only require a linear number of messages. On the other hand, although PEAV\_DPOP also uses the inference protocol, it still suffers from a severe scalability problem and can only solve the problems with the agent number less than 12. The phenomenon is due to the mirror variables introduced by PEAV formulation, which significantly increases the complexity. More specifically, a UTIL message in PEAV\_DPOP contains the dimensions of mirror variables, which significantly increases the memory consumption.

\begin{figure}
	\begin{minipage}{\linewidth}
		\centering
		\includegraphics[scale=.35]{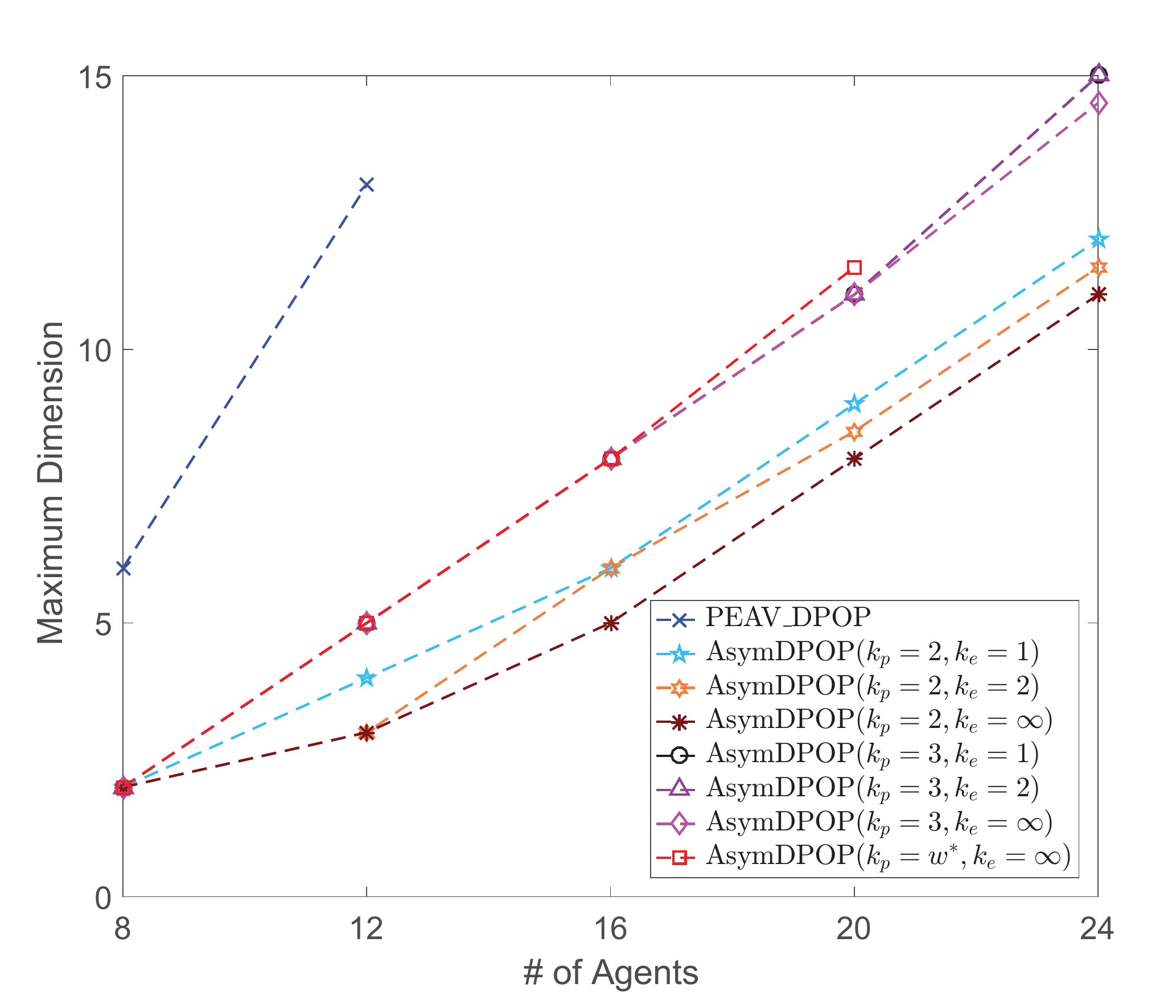}\\
		(a) maximum dimensions
	\end{minipage}
	\begin{minipage}{\linewidth}
		\centering
		\includegraphics[scale=.35]{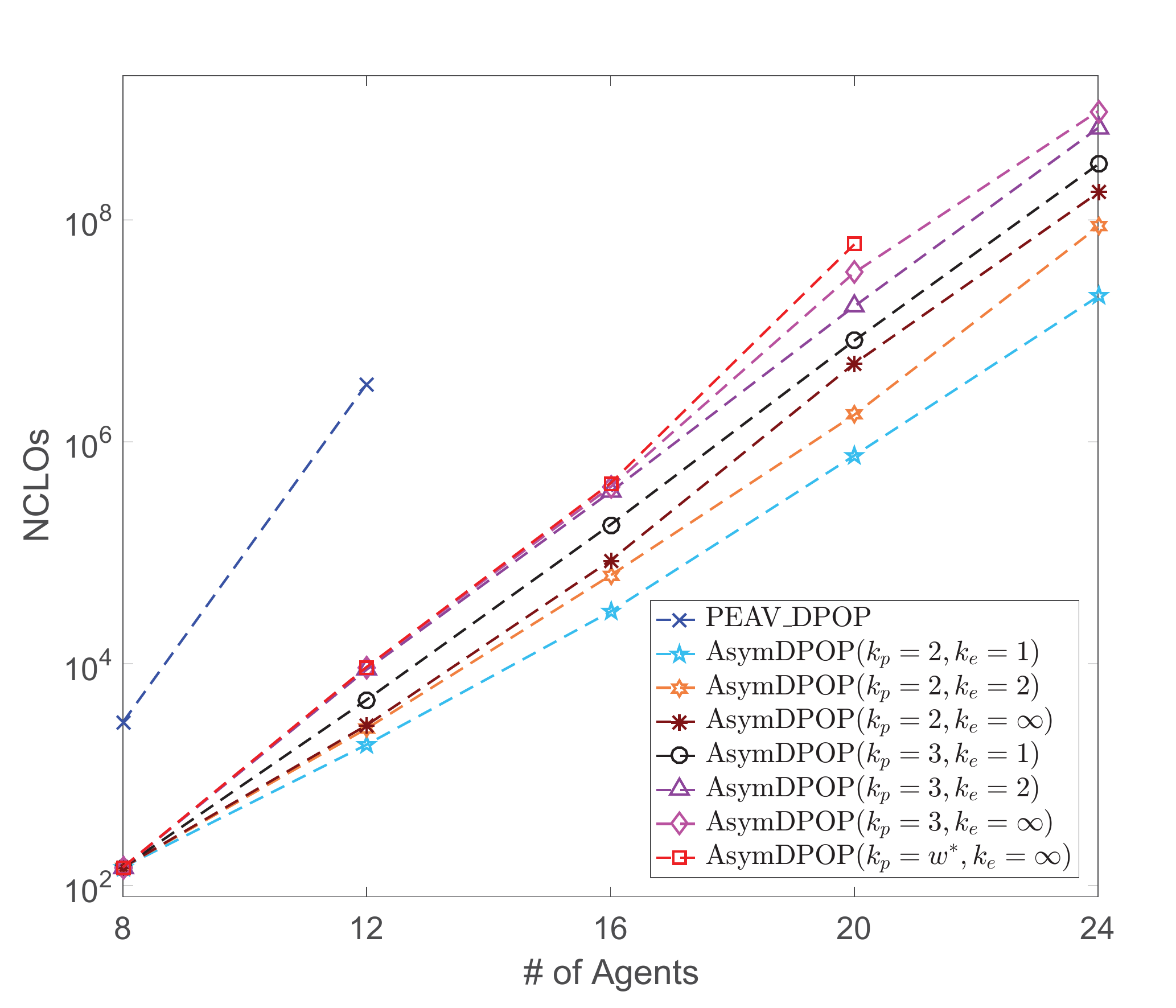}\\
		(b) NCLOs
	\end{minipage}
	\caption{Performance comparison of different batch sizes}
	\label{batches} 
\end{figure}
Fig. \ref{densities} presents the performance comparison under different densities. Specifically, we consider the ADCOPs with the agent number of 8, the domain size of 8 and the density varying from 0.25 to 1.  It can be concluded from Fig. \ref{densities}(a) that AsymDPOP with TSPS (i.e., $k_p<w^*$ where $w^*$ is the induced width) incurs significantly less communication overheads when solving dense problems, which demonstrates the merit of avoiding unnecessary join operations. Besides, it is interesting to find that compared to the one of AsymDPOP without TSPS (i.e., $k_p=w^*$), the network load of AsymDPOP$(k_p<w^*,\cdot)$ increases much slowly. That is due to the fact that as the density increases, eliminations are more likely to happen at the top of a pseudo tree. On the other hand, since unnecessary join operations are avoided in TSPS, eliminations are the major source of the dimension increase. As a result, agents propagate low dimension utility tables in most of the time. For NCLOs, search-based algorithms like AsymPT-FB and PT-ISABB outperform AsymDPOP algorithms when solving dense problems due to their effective pruning.

To demonstrate the effects of different batch sizes, we benchmark MBES with different $k_e$ when combining with TSPS with different $k_p$ on the configuration used in the first experiment. Specifically, we measure the maximal number of dimensions generated in the utility propagation phase, including the intermediate results of MBES. Fig. \ref{batches} presents the experimental results. It can be seen that AsymDPOP with a small batch size significantly reduces NCLOs but produces larger intermediate results, which indicates the necessity of the tradeoff. Besides, the performance of MBES significantly degenerates when combining with TSPS with a larger $k_p$. That is because the utility tables contain more dimensions in the scenario, and a utility table would be traversed more frequently when performing eliminations.

\begin{figure}
	\centering
	\includegraphics[scale=.35]{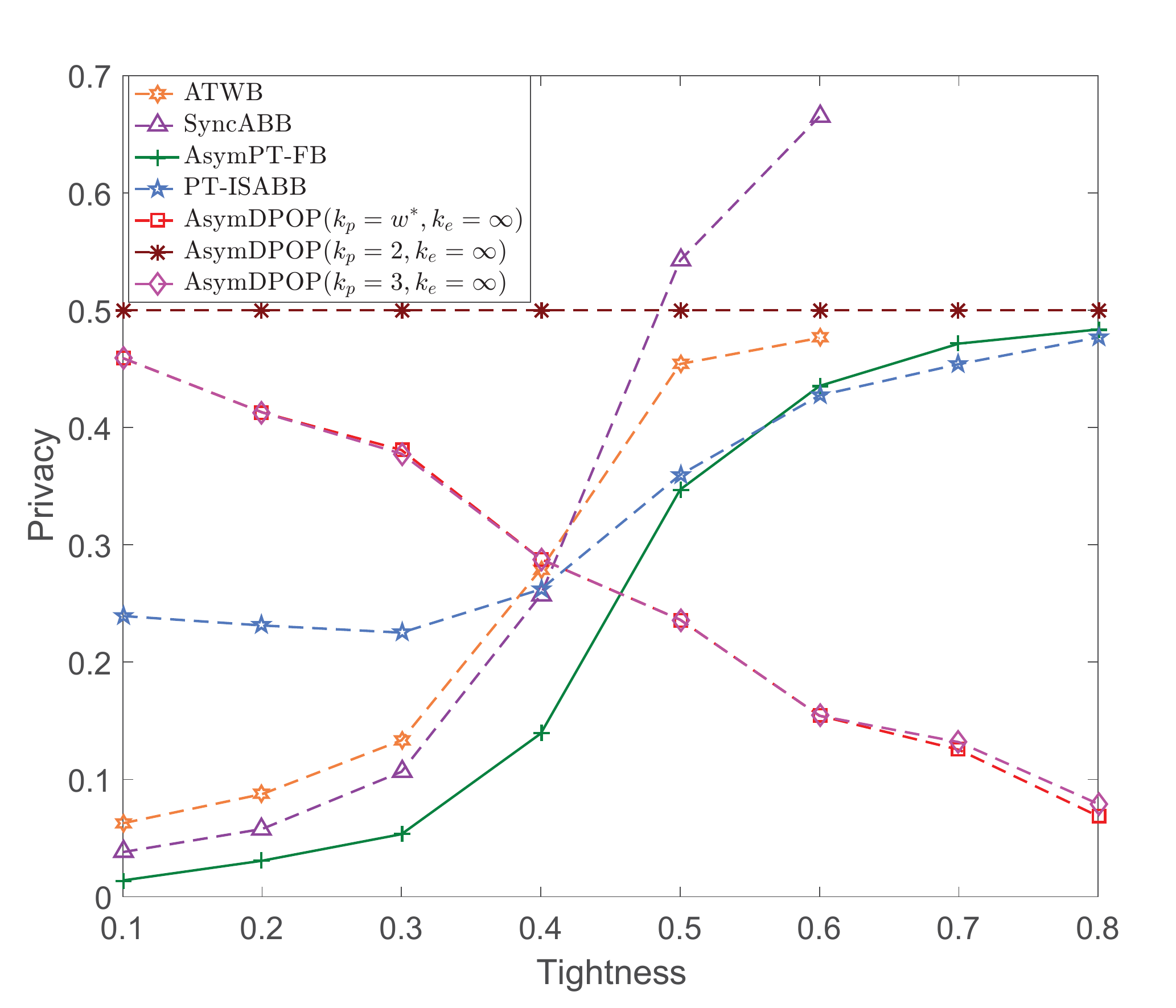}
	\caption{Privacy loss under different tightness}
	\label{privacy-loss}
\end{figure}
In the last experiment, we consider the privacy loss of different algorithms when solving asymmetric MaxDCSPs with different tightness. In particular, we consider the asymmetric MaxDCSPs with 10 agents, the domain size of 10, the density of 0.4 and the tightness varying from 0.1 to 0.8. Fig. \ref{privacy-loss} presents the results. It can be concluded from the figure that as the tightness grows, the search-based algorithms leak more privacy while the inference-based algorithms leaks less privacy. That is due to the fact that search-based algorithms rely on a direct disclosure mechanism to aggregate the private costs. Thus, search-based algorithms would leak more privacy when solving the problems with high tightness as they need to traverse more proportions of the search space. In contrast, inference-based algorithms accumulate utility through the pseudo tree, and an agent $a_i$ could infer the private costs of its (pseudo) child $a_c$ when the utility table involving both $x_i$ and $x_c$ is a binary table which is not a result of eliminations or contains zero entries.  Thus, AsymDPOP($k_p=2,\cdot$) leaks almost a half of privacy. On the other hand, since the number of prohibit combinations grows as the tightness increases, AsymDPOP($k_p\ge 3,\cdot$) incurs much lower privacy loss when solving high tightness problems.

\section{Conclusion} 
In this paper we present AsymDPOP, the first complete inference algorithm for ADCOPs. The algorithm incorporates three ingredients: generalized non-local elimination which facilitates the aggregation of utility in an asymmetric environment, table-set propagation scheme which reduces the memory consumption and mini-batch elimination scheme which reduces the operations in the utility propagation phase. We theoretically show its complexity and our empirical evaluation demonstrates its superiorities over the state-of-the-art, as well as the vanilla DPOP with PEAV formulation.

\section*{Acknowledgements}
We would like to thank the anonymous reviewers for their valuable comments and helpful suggestions. This work is supported by the Chongqing Research Program of Basic Research and Frontier Technology under Grant No.:cstc2017jcyjAX0030, Fundamental Research Funds for the Central Universities under Grant No.: 2018CDXYJSJ0026, National Natural Science Foundation of China under Grant No.: 51608070 and the Graduate Research and Innovation Foundation of Chongqing, China under Grant No.: CYS17023
\bibliographystyle{named}
\bibliography{ref}

\newpage
\appendix
\title{The Appendix} 
\maketitle
\renewcommand{\appendixname}{Appendix~\Alph{section}}
\renewcommand\thesection{\Alph{section}}
\section{AsymDPOP}

\subsection{An Example for AsymDPOP}

Fig.2 presents a pseudo tree. For better understanding, we take the agent $a_2$ to explain the concepts in a pseudo tree. Since $a_1$ is the only ancestor constrained with $a_2$ via a tree edge, we have $P(a_2)=\{a_1\}$, $PP(a_2)=\emptyset$ and $Sep(a_2)=\{a_1\}$.  Similarly, since $a_3$ and $a_4$ are the descendants constrained with $a_2$ via tree edge, we have $C(a_2)=\{a_3, a_4\}$, $PC(a_2)=\emptyset$, $Desc(a_2)=\{a_3, a_4\}$. Particularly, since $a_4\in Desc(a_2)$ is constrained with $a_1\in Sep(a_2)$, we have $ID(a_2)=\{a_4\}$.

AsymDPOP begins with the leaf agents ($a_3$ and $a_4$) sending their utility tables ($util_3$ and $util_4$) to their parent $a_2$, where $util_3 = f_{32}$ and $util_4 = f_{41}\otimes f_{42}$.

When $a_2$ receives the UTIL message from its child $a_3$ (assume  $a_3$'s UTIL message has arrived earlier), it stores the received utility table (i.e., $util_{2}^3 = util_3 = f_{32}$), and joins its private function $f_{23}$ to update the table $util_{2}^3$ (i.e., $util_{2}^3 = f_{32}\otimes f_{23}$). 
According to the definition of $EV(a_i, a_c)$ that all the belonging variables’ highest (pseudo) parent is $a_i$ in branch $a_c$. Here, $EV( a_2,a_3 )$ is given by
$$EV( a_2,a_3 ) =PC( a_2 ) \cap Desc( a_3 ) \cup \{ a_3 \} \backslash ID( a_2 )$$
$$=\emptyset \cap \emptyset \cup \{a_3\}\backslash \{a_4\}=\{a_3\}$$
Thus, $a_2$ eliminates variable $x_3$ from the utility table $util_{2}^3$. After that, $a_2$ joins the eliminated result to the running utility table $util_2$ ($util_2$ has been initialized to $f_{21}$). Similarly, upon receipt of the UTIL message from $a_4$, $a_2$ saves the received utility table (i.e., $util_2^4 = util_4 = f_{41}\otimes f_{42}$), then updates the utility table $util_2^4$ by joining the private function $f_{24}$ (i.e., $util_2^4 = f_{41}\otimes f_{42}\otimes f_{24}$). Since $EV( a_2,a_4 ) =\emptyset$, $a_2$ joins $util_2^4$ to the running utility table $util_2$ directly without any elimination. Since $a_2$ has received all the UTIL messages from its children, it propagates the utility table $util_2$ to its parent $a_1$. Here, we have 
$$util_2 = f_{21}\otimes(f_{41}\otimes f_{42}\otimes f_{24})\otimes (\underset{x_3}{\min}f_{23}\otimes f_{32})$$

When $a_1$ receives the UTIL message from $a_2$, it saves the received utility table (i.e., $util_1^2 = util_2$) and joins its private functions $f_{12}$ and $f_{14}$ to update $util_1^2$ (i.e., $util_1^2 = util_1^2 \otimes(f_{12}\otimes f_{14})$). After eliminating $x_2$ and $x_4$ ($EV\left( a_1,a_2 \right) =\{a_2,a_4\}$), $a_1$ joins the eliminated result into its running utility table $util_1$ ($util_1$ has been initialized to $null$). Here, we have
$$util_1 =
\underset{x_2,x_4}{\min}f_{12}\otimes f_{14}\otimes ( f_{21}\otimes ( f_{41}\otimes f_{42}\otimes f_{24} ) \otimes ( \underset{x_3}{\min}f_{23}\otimes f_{32} ) ) 
$$
Since $a_1$ is the root agent and receives all the UTIL messages from its children, it selects the optimal assignment $v_1^*$ for itself and the optimal assignments $v_2^*$ and $v_4^*$ for the eliminated variables $a_2$ and $a_4$. That is,
$$v_1^* = \underset{x_1}{\text{argmin}}\, util_1,$$
$$
( v_{2}^{*},v_{4}^{*} ) =\underset{x_2,x_4}{\text{argmin}}\, util_{1}^{2}( x_1=v_{1}^{*}  ) 
$$
Then $a_1$ propagates a VALUE message including the optimal assignment to its child $a_2$, where the optimal assignment is $\{ ( x_1=v_{1}^{*} ) ,( x_2=v_{2}^{*} ) ,( x_3=v_{3}^{*} ) \} $. 

When receiving the VALUE message from $a_1$, $a_2$ assigns the value $v_{2}^{*}$ for itself. Since $EV( a_2,a_3 ) =\{a_3\}$ and $EV( a_2,a_4 ) =\emptyset $, $a_2$ selects the optimal assignment for $a_3$ by performing optimization over $util_2^3$ with the determined assignment of $x_2$, that is 
$$
v_{3}^{*} =\underset{x_3}{\text{argmin}}( util_{2}^{3}( x_2=v_{2}^{*} ) ) 
$$

Then $a_2$ propagates the optimal assignment $\{(x_2 = v_{2}^{*}),(x_3 = v_{3}^{*})\}$ and $\{(x_1 = v_{1}^{*}),(x_2 = v_{2}^{*}),(x_4 = v_{4}^{*})\}$  to $a_3$ and $a_4$, respectively. Once $a_3$ and $a_4$ receives the VALUE messages, they assign for themselves. Since all leaf agents have received VALUE messages, the algorithm terminates.
\section{AsymDPOP with TSPS and MBES}
\subsection{Pseudo Code for AsymDPOP with TSPS and MBES}

\begin{figure}
	\begin{minipage}[t]{0.5\linewidth} 
		\includegraphics[scale=1]{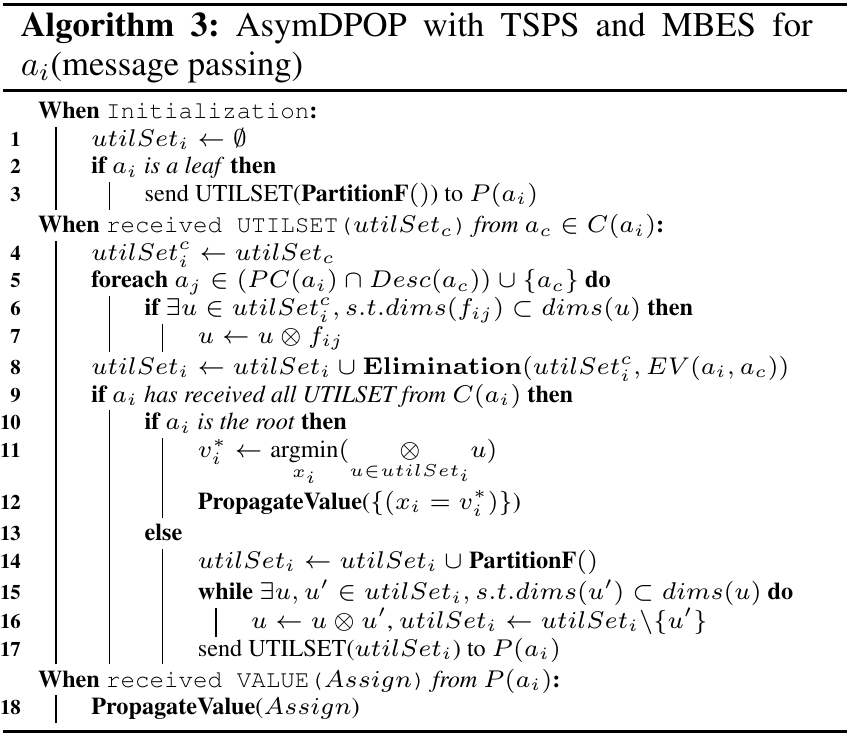} 
		\\
		\centering	
	\end{minipage}%
	\caption{The sketch of AsymDPOP with TSPS and MBES (message passing)}
\end{figure}

Fig.11 and Fig.12 present the sketch of AsymDPOP with TSPS and MBES which consists of two phases: utility set propagation phase and value propagation phase. Different from the utility propagation phase in AsymDPOP, the utility set propagation phase applies the TSPS to trade off between memory and privacy through propagating multiple small utility tables, and utilizes the MBES to trade off between time and space by sequential optimizations.

The utility set propagation phase begins with leaf agents sending their utility tables to their parents via UTILSET messages (line 2-3). Note that, the utility tables are obtained from the Function \textbf{PartitionF}. That is, leaf agents partition their local utilities into multiple smaller utility tables according to $k_p$ (line 32-41). If there is any residue, $a_i$ adds them to the utility tables (line 42-43). 

\begin{figure}
	\begin{minipage}[t]{0.5\linewidth} 
		\includegraphics[scale=1]{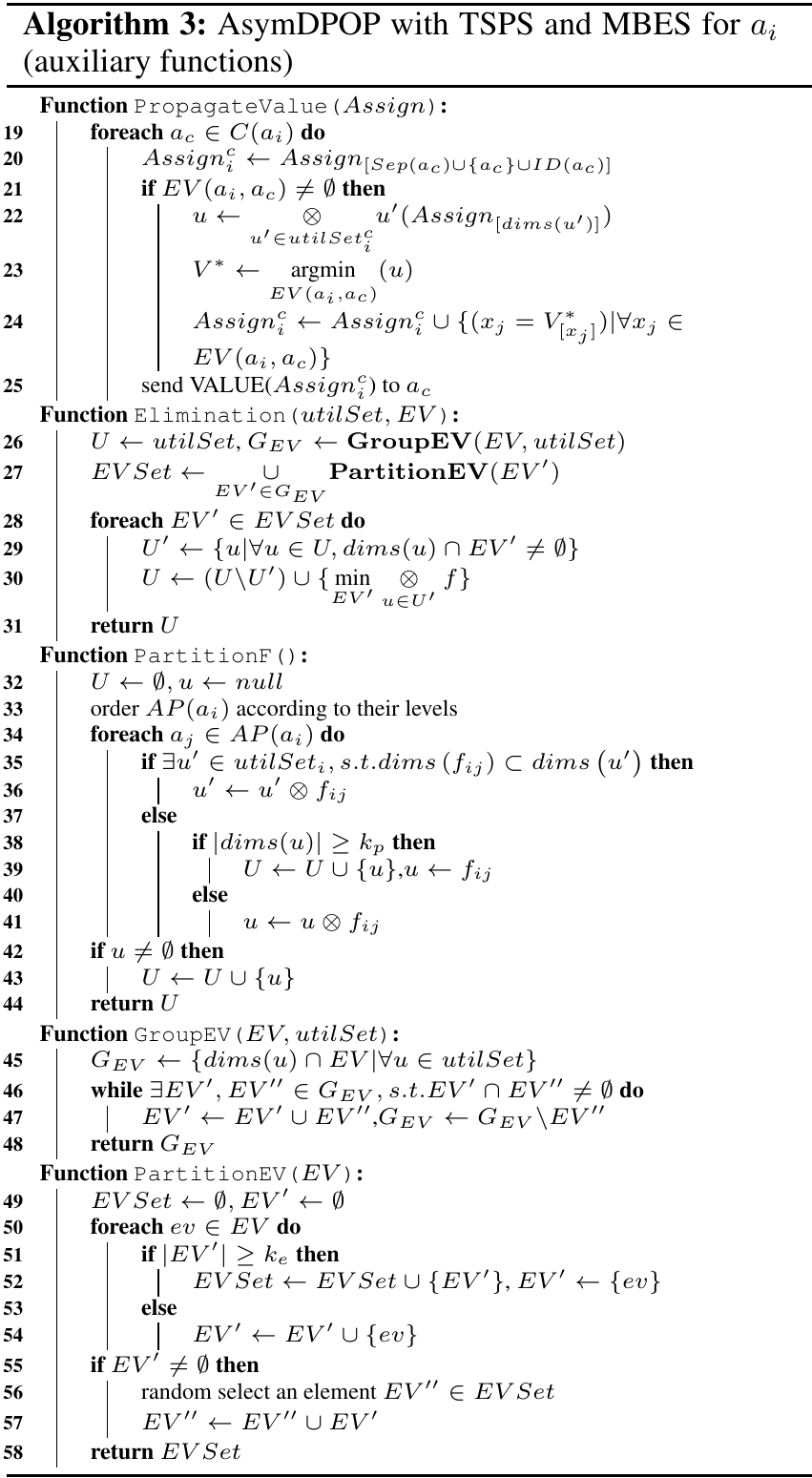} 
		\\
		\centering	
	\end{minipage}%
	\caption{The sketch of AsymDPOP with TSPS and MBES (auxiliary functions)}
\end{figure}
When receiving the UTILSET message from its child $a_c$, $a_i$ stores the received utility tables (i.e., $utilSet_i^c\gets utilSet_c$), then joins its private functions w.r.t. its (pseudo) children in branch $a_c$ into the utility tables $utilSet_i^c$ (line 4-7). Note that the join operation of its private functions w.r.t. its children does not increase the number of dimensions and should be applied accordingly to the related utility tables. Then the Function \textbf{Elimination} are used to implement sequential eliminations on the utility tables $utilSet_i^c$ with MBES (line 8, 26-30). In the Function \textbf{Elimination}, 
variables in $EV(a_i,a_c)$ are firstly divided into several groups ($G_{EV}$) such that the variables in each group share at least one common utility table (line 26). And then, for the variables in each group, $a_i$ partitions them into several sets (or batches) with the Function \textbf{PartitionEV} according to $k_e$ (line 27, 49-58). After obtaining the eliminated variable sets ($EVSet$), $a_i$ traverses the sets to optimize the utility tables ($utilSet$) (line 28-30). In detail, for each set, MBES performs optimization to the utility tables that are related to the eliminated variable in the set and replaces these utility tables with the optimized utility table.

When receiving all the UTILSET messages from its children, $a_i$ adds its local utility tables computed by the Function \textbf{FunctionF} into the optimized utility tables, then sends these utility tables to its parent if it is not the root agent (line 14-17). Otherwise, the value propagation phase starts (line 11-12). 

The value propagation phase is roughly the same as the one in AsymDPOP. The root agent $a_i$ selects the optimal assignments for itself and the eliminated variables ($EV(a_i,a_c)$) belong to each branch $a_c\in C(a_i)$. And since the utility set propagation uses the TSPS, there may have more than one utility table. So before selecting the optimal assignments for branch $a_c$ (line 23), $a_i$ joins all the tables in $utilSet_i^c$ with the assignments received from its parent (line 18) or computed locally (line 11-12) . After that, $a_i$ sends a VALUE message to its child $a_c$ (line 25). The algorithm terminates when each leaf agent receives a VALUE message.

\subsection{An Example of AsymDPOP with TSPS and MBES}
\begin{figure}
	\begin{minipage}[t]{0.5\linewidth} 
		\includegraphics[scale=0.5]{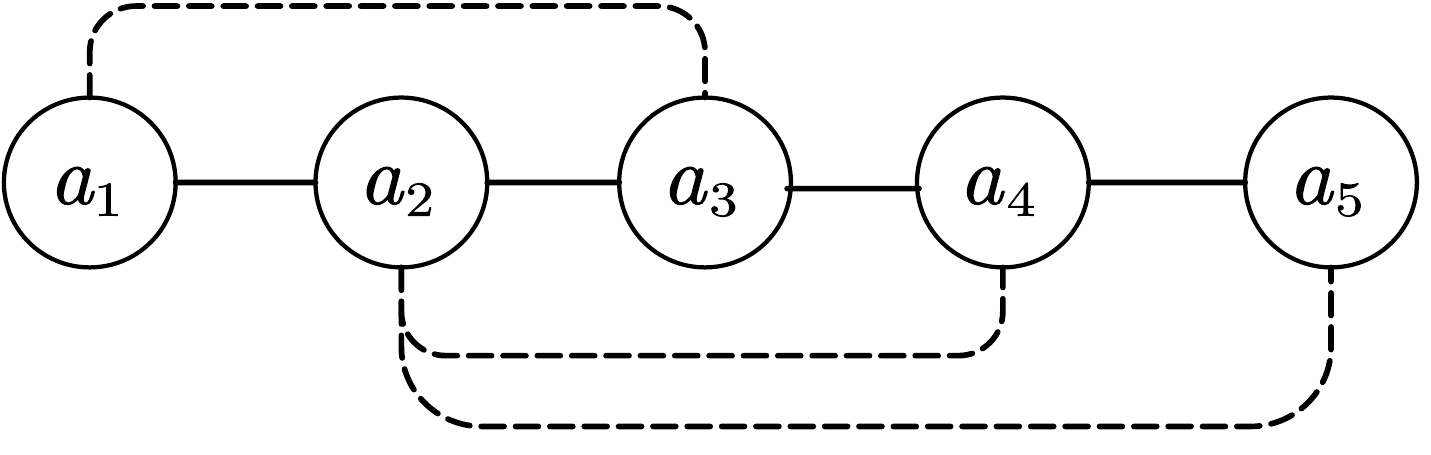} 
		\\
		\centering	
	\end{minipage}%
	\caption{A chain-like pseudo tree}
\end{figure}
For the convenience of further explanation, we denote the joint utility table $u_{ijk}$ as a function of three variables $x_i,x_j$ and $x_k$. And the index of the function is sorted in alphabetical order. 

We take Fig.13 as an example to demonstrate the Algorithm 3. And suppose that $k_p=3$ and $k_e=1$. The algorithm begins with the leaf agent $a_5$ by sending a UTILSET message with a utility table $u_{245}$ to its parent $a_4$. Here, $u_{245}$ is computed by Function \textbf{PartitionF}. Specifically, $a_5$ groups its private functions related with $AP(a_5)$ into a function set $\{f_{52},f_{54}\}$, since the dimension of its local utility table is three which satisfies $k_p=3$. Thus, $a_5$ obtains a 3-ary utility table $u_{245}$ by joining the functions in the function set (i.e., $u_{245}=f_{54}\otimes f_{52}$).

When $a_4$ receives the UTILSET message from its child $a_5$, it saves the utility table ($u_{245}$) and updates the utility table by joining its private functions $f_{45}$ and $f_{42}$ (i.e., $u_{245}=u_{245}\otimes (f_{45}\otimes f_{42})$). Since $a_4$ is not the highest (pseudo) parent of its any descendants ($EV(a_4,a_5)=\emptyset$), it does not need to perform elimination. Further more, since $a_4$ has received all the UTILSET messages from its children, it sends the utility tables $\{u_{245},u_{34}\}$ to  its parent $a_3$. Here, the utility table $u_{34}$ is computed by the Function \textbf{PartitionF} through dealing with the residual function $f_{43}$. 

Upon receipt of the UTILSET message from $a_4$, similarly, $a_3$ saves the utility tables and updates the utility table $u_{34}$ by joining its private functions $f_{34}$ (i.e., $u_{34}=u_{34}\otimes f_{34}$). Since $a_3$ is not the highest (pseudo) parent of $a_4$ or any other descendants, there is no elimination at $a_3$. Because the residual functions $f_{32}$ and $f_{31}$ could not be joined into the utility tables $\{u_{245},u_{34}\}$ without increase the number of dimensions, $a_3$ deals them with Function \textbf{PartitionF} and gets a utility table $u_{123}$. After that, it sends the utility tables $\{u_{123}, u_{245}, u_{34}\}$ to its parent $a_2$.

Once $a_2$ receives the UTILSET message from $a_3$, it also saves the utility tables firstly, and joins the relative private functions into the corresponding utility tables (i.e., $u_{245}=u_{245}\otimes (f_{25}\otimes f_{24}), u_{123}=u_{123}\otimes (f_{23}\otimes f_{21})$). Since $a_2$ is the highest (pseudo) parent of $a_4$ and $a_5$, that is $EV(a_2,a_3)=\{a_4,a_5\}$, the eliminations are preformed by the Function \textbf{Elimination}. In Function \textbf{Elimination}, $a_2$ firstly partitions the eliminated variables $x_4$ and $ x_5$ into a group, as they share a common utility table $f_{542}$. And since $k_e=1$, $a_2$ divides these variables into two batches and eliminates them from the utility tables $\{u_{123}, u_{245}, u_{34}\}$ one by one. After eliminating $x_5$ (assume $x_5$ is eliminated first), $a_2$ gets the utility tables $\{u_{123}, u_{24}, u_{34}\}$. Then $a_2$ obtains the new utility tables $\{ u_{123}, u_{2}, u_{3} \}$ by eliminating $x_4$ (i.e., $u_{2}=\underset{x_4}{\min}( u_{24}),u_{3}=\underset{x_4}{\min}( u_{34})$). After that, $a_2$ updates the utility table $u_{321}$ by a joint operation of the utility tables $\{u_{2}, u_{3}\}$ (i.e., $u_{123}=u_{123}\otimes u_{2} \otimes u_{3}$), and sends the updated utility table $\{u_{123}\}$ to its parent $a_1$.

When $a_1$ receives the UTILSET message from $a_2$, it saves the utility table, and then joins its private functions $f_{12}$ and $f_{13}$ into the utility table $u_{123}$ (i.e., $u_{123}=u_{123}\otimes f_{12}\otimes f_{13}$). Since $EV(a_1,a_2)=\{a_2,a_3\}$ and the variables $x_2$ and $x_3$ are both relative to the utility table $u_{321}$, they are partitioned into a single group. But since $k_e=1$, $a_1$ still eliminates the variables $x_2$ and $x_3$ from the utility table $\{u_{123}\}$, respectively, and obtains the eliminated utility table $\{u_{1}\}$. Since it is the root agent and has received all the UTILSET messages from its children, $a_1$ chooses the optimal value $v_1^*$ for itself based on the utility table $\{u_{1}\}$.

After that, the value propagation phase starts. $a_1$ selects the optimal values $v_2^*$ and $v_3^*$ for $a_2$ and $a_3$, then propagates the assignment $\{(x_1=v_1^*), (x_2=v_2^*), (x_3=v_3^*)\}$ to its child $a_2$. When receiving the VALUE message from $a_1$, $a_2$ selects the optimal assignments for $a_4$ and $a_5$, and sends the assignment $\{(x_1=v_1^*), (x_2=v_2^*), (x_3=v_3^*), (x_4=v_4^*), (x_5=v_5^*)\}$ to $a_3$. Upon receipt of the VALUE message from $a_2$, $a_3$ assigns for itself and sends the assignments received from $a_2$ to its child $a_4$. And $a_4$ performs just like $a_3$. The algorithm terminates when the leaf agent $a_5$ receives the VALUE message and assigns for itself. 

\section{Experiment Results}

\subsection{The Experimental Configuration}

\subsubsection{The Experiment with Different Agent Numbers}
\begin{itemize}
	\item Problem type: Random ADCOPs
	\item Agent numbers: [8, 24]
	\item Density: 0.25
	\item Domain size: 3
\end{itemize}

\subsubsection{The Experiment with Different Densities}
\begin{itemize}
	\item Problem type: Random ADCOPs
	\item Agent numbers: 8
	\item Density: [0.25, 1.0]
	\item Domain size: 8
\end{itemize}

\subsubsection{The Experiment with Different Domain size}
\begin{itemize}
	\item Problem type: Random ADCOPs
	\item Agent numbers: 8
	\item Density: 0.4
	\item Domain size: [4, 14]
\end{itemize}

\subsubsection{The Experiment with Different Tightness}
\begin{itemize}
	\item Problem type: ADCSPs
	\item Agent numbers: 10
	\item Density: 0.4
	\item Domain size: 10
	\item Tightness: [0.1, 0.8]
\end{itemize}

\subsection{The Experiment Results}

\begin{figure*}{
		\centering
		\subfloat[Network Load]{
			\includegraphics[scale=0.48]{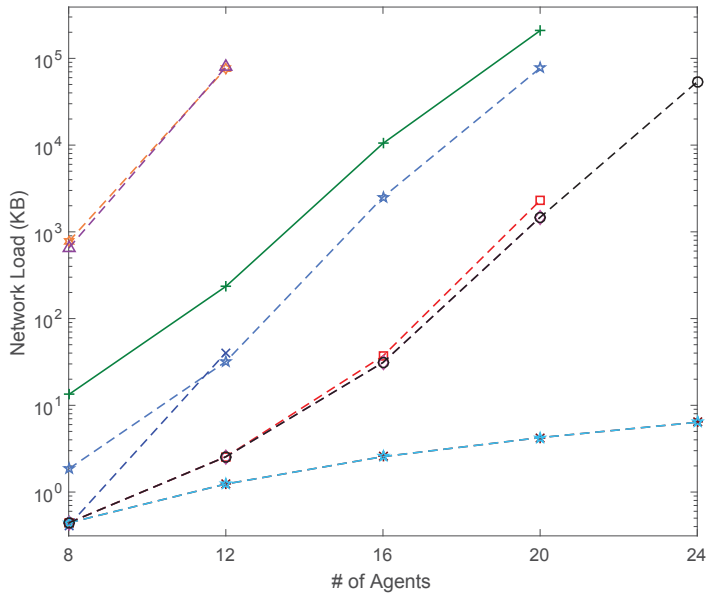} 
			\centering	
			
		}
		\subfloat[NCLOs]{
			\includegraphics[scale=0.48]{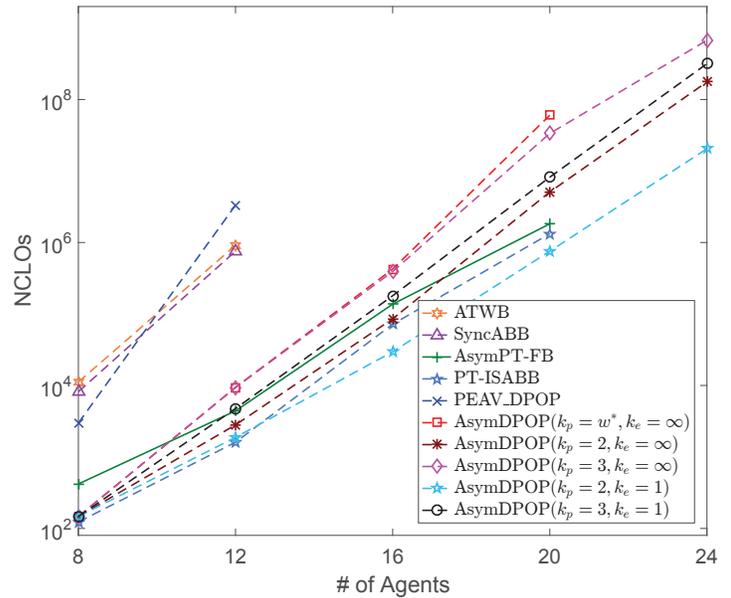} 
		}
		\caption{Performance comparison under different agent numbers}
	}
\end{figure*}

\begin{figure*}
	\centering
	\subfloat[Network Load]{
		\includegraphics[scale=0.48]{density_msgCnt.pdf} 
		
	}
	\subfloat[NCLOs]{
		\includegraphics[scale=0.48]{density_nclo.pdf} 
	}
	\caption{Performance comparison under different densities}
\end{figure*}
\begin{figure*}
	\centering
	\subfloat[Network Load]{
		\includegraphics[scale=0.5]{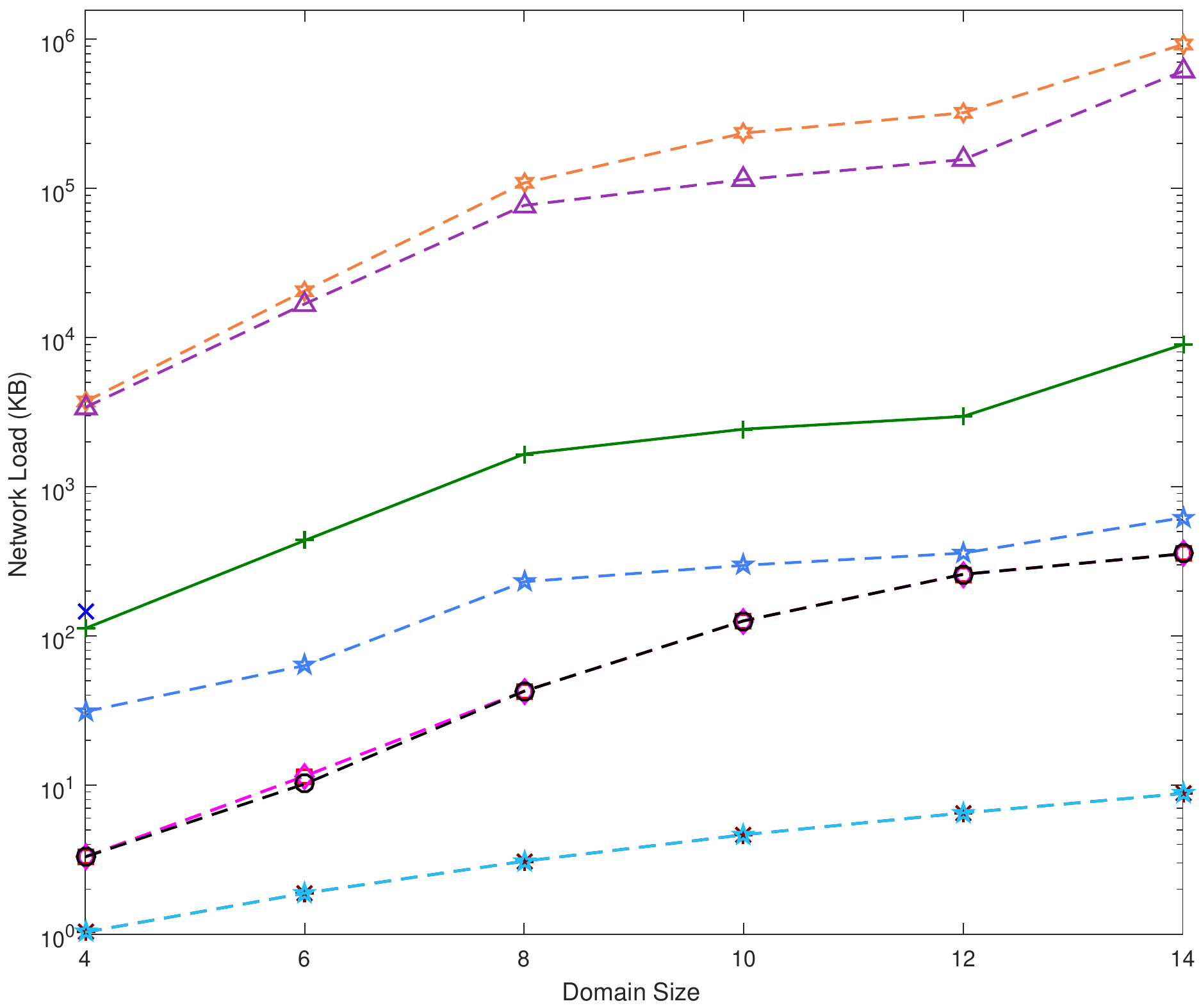} 
		\centering	
	}
	\subfloat[NCLOs]{
		\includegraphics[scale=0.5]{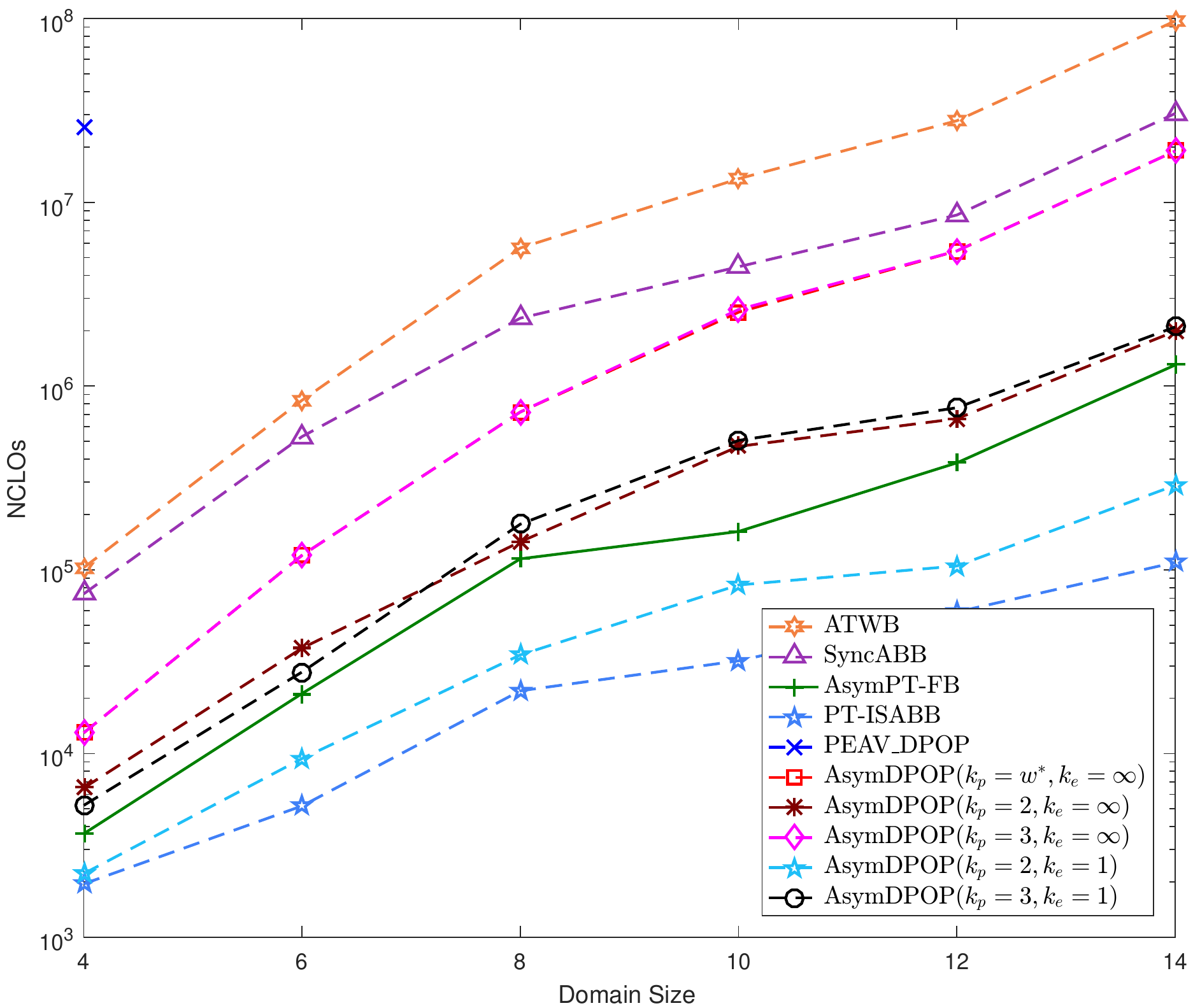} 
	}
	\caption{Performance comparison under different domain sizes}
\end{figure*}

\begin{figure*}
	\centering
	\subfloat[Maximum Dimension]{
		\includegraphics[scale=0.48]{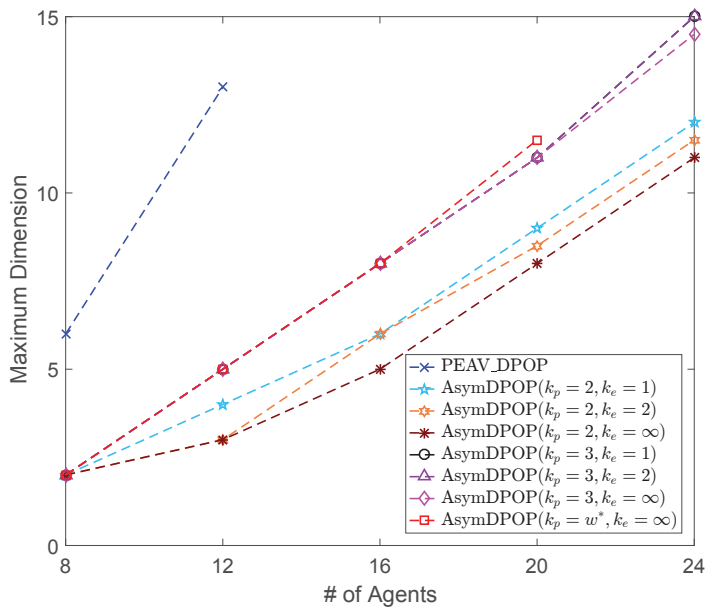} 
		\centering	
		
	}
	\subfloat[NCLOs]{
		\includegraphics[scale=0.48]{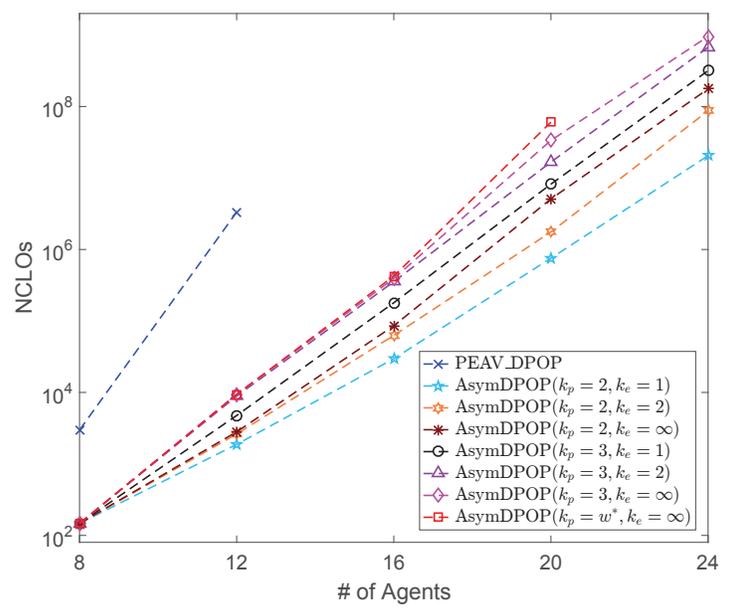} 
		
	}
	\caption{Performance comparison under different batch sizes}
\end{figure*}

\begin{figure*}
	\centering
	\subfloat[Network Load]{
		\includegraphics[scale=0.48]{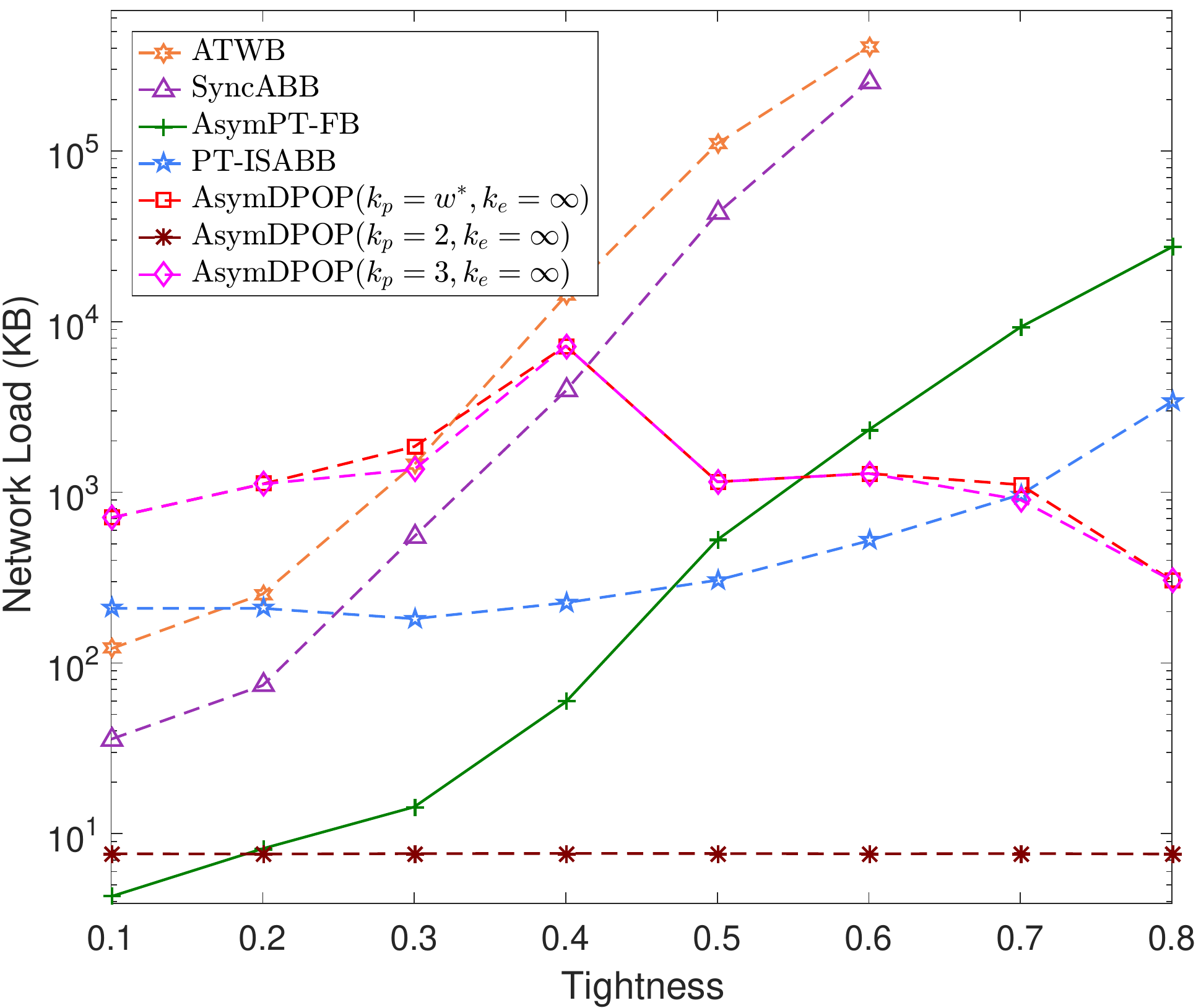} 
		\centering	
	}
	\subfloat[NCLOs]{
		\includegraphics[scale=0.48]{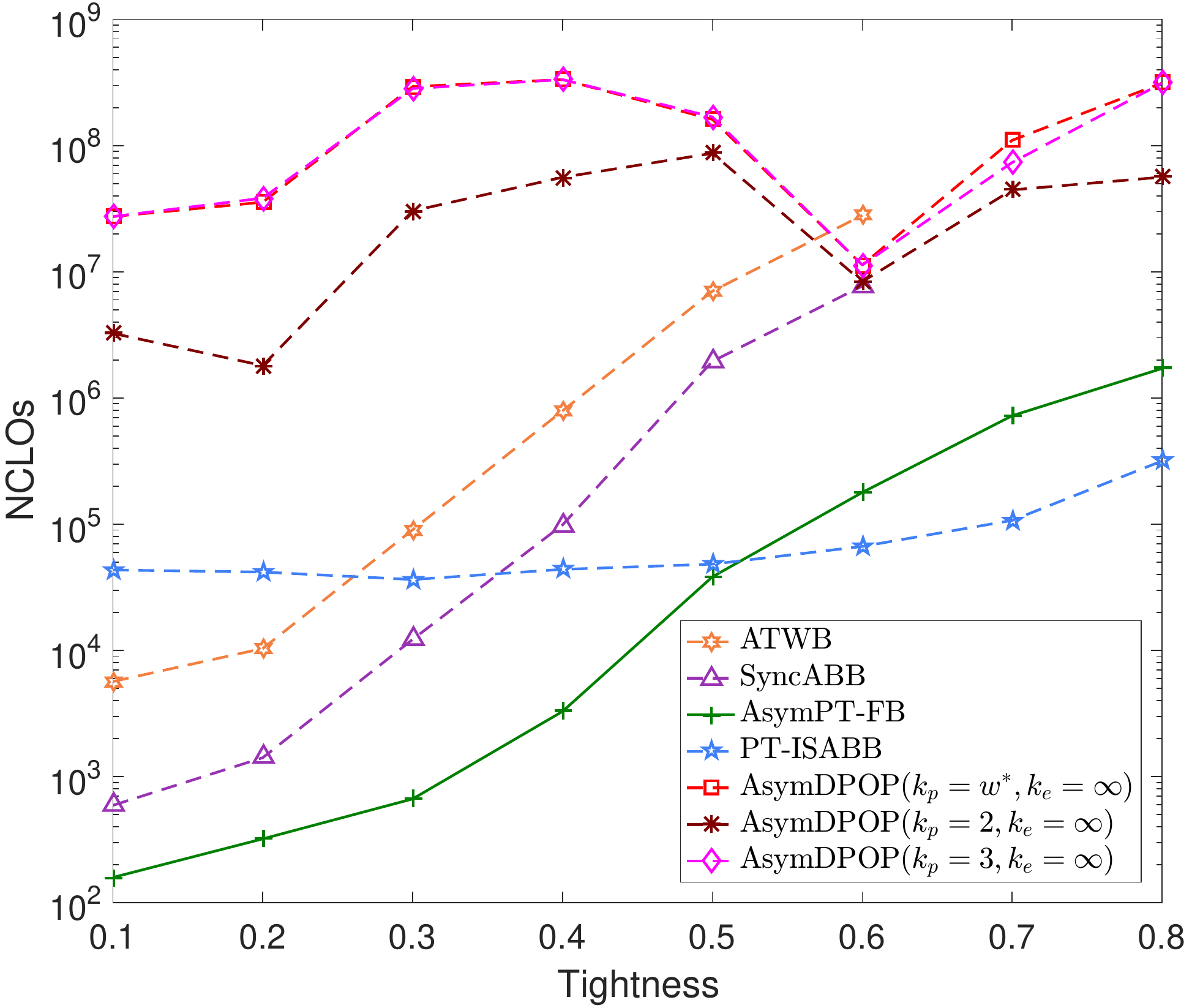} 
		\centering	
	}\\
	\subfloat[Privacy]{
		\includegraphics[scale=0.62]{dcsp_privacy.pdf} 
		
	}
	\caption{Performance comparison under different tightness}
\end{figure*}

\end{document}